\algrenewcommand\alglinenumber[1]{{\sf\footnotesize\textcolor{gray}{#1}}}
\algrenewcommand\algorithmicrequire{\textbf{Precondition:}}
\algrenewcommand\algorithmicloop{}
\newcommand{\Block}{\Loop}
\newcommand{\EndBlock}{\EndLoop}
\renewcommand{\Repeat}{\Block \textbf{repeat}}
\newcommand{\EndRepeat}{\EndBlock}
\renewcommand{\paragraph}[1]
{\medskip\par\noindent{\bf #1} }
\newtheorem{theorem}{Theorem}
\newtheorem{lemma}{Lemma}
\newcommand{\nz}{\operatorname{\mbox{\sf nz}}}
\renewcommand{\nz}{N}
\newcommand{\tran}{^{\mbox{\tiny \sf T}}}
\newcommand{\opt}{\operatorname{\mbox{\sc opt}}}
\newcommand{\cost}{\operatorname{cost}}
\newcommand{\cov}{\operatorname{cov}}
\newcommand{\lmin}{\operatorname{lmin}}
\newcommand{\lmax}{\operatorname{lmax}}
\newcommand{\for}{\text{ for }}
\newcommand{\eps}{\epsilon}
\newcommand{\cust}{C}
\newcommand{\fac}{F}
\newcommand{\R}{\mathbb{R}}
\newcommand{\Rp}{\mathbb{R}_+}
\newcommand{\est}[1]{\widehat #1}
\newcommand{\LP}{{\sc lp}\xspace}
\newcommand{\LPs}{{\sc lp}s\xspace}
\newcommand{\iffull}[1]{}
\title{Nearly Linear-Work Algorithms\\ for 
  Mixed Packing/Covering and \\ Facility-Location Linear Programs}
\author{Neal E. Young \thanks{Research partially supported by NSF grant 1117954.}}
\affil{University of California, Riverside, \tt neal.young@ucr.edu}
\date{}
\begin{document}
\maketitle
\begin{abstract}
  We describe the first nearly linear-time approximation algorithms 
  for explicitly given mixed packing/covering linear programs,
  and for (non-metric) fractional facility location.
  We also describe the first parallel algorithms requiring only near-linear total work
  and finishing in polylog time.
  The algorithms compute $(1+\eps)$-approximate solutions 
  in time (and work) $\tilde O(N/\eps^2)$,
  where $N$ is the number of non-zeros in the constraint matrix.
  For facility location, $N$ is the number of eligible client/facility pairs.
\end{abstract}
\thispagestyle{empty}
\titlepage 

\vspace*{-0.5in}
\section{Introduction}\label{sec:intro}
Mixed packing/covering LP's are fundamental to combinatorial optimization
in computer science and operations research, with numerous applications,
including many that are not pure packing or covering
--- solving linear systems, computer tomography,
machine scheduling, routing problems, 
multicommodity flow with demands, etc.
In many approximation algorithms for NP-hard problems,
solving such an LP in order to round the solution
is the main bottleneck.

Algorithms with linear worst-case run time
(the time it takes just to read the input) 
have always been something of a holy grail.
Their importance is increasing with the abundance of big data,
and, with the growing reliance on multiprocessors and server farms,
linear-work algorithms 
that can be highly parallelized are of particularly interest.

\paragraph{Results.} 
We give the first nearly linear-time $(1+\eps)$-approximation algorithms 
and the first nearly linear-work parallel algorithms
for mixed packing/covering linear programs and for fractional facility location.
Let $N$ be the input size, that is, the number of non-zeroes in the linear program.\@
Let $m$ and $n$ be the numbers of constraints and variables, respectively.
Generally $\max(m,n) \le N \le mn$. 

For mixed packing/covering,
Thm.~\ref{thm:pc} 
gives a $(1+\eps)$-approximation algorithm 
taking time $O(\,N\log(m)/\eps^2\,)$
and a parallel algorithm doing work
$O\big(\,N \log m\, \log\big(n\log(m)/\eps\big)/\eps^2\,\big)$
in polylog time,
$O\big(\log N\, \log^2 m\, \log\big(n\log(m)/\eps\big)/\eps^4\,\big)$.
For fractional facility location, Thm.~\ref{thm:fl} 
gives a $(1+\eps)$-approximation algorithm
running in time $O(\,N\log(m)/\eps^2\,)$
and a parallel algorithm doing work
$O(\,N \log N\, \log(m)/\eps^2\,)$,
in polylog time, $O(\,\log N\,\log^2(m)/\eps^4\,)$.
For facility location, the input size $N$ is the number of eligible (client, facility) pairs;
there are $m$ clients and $n$ facilities.

\paragraph{Definitions.}
A \emph{mixed packing/covering linear program (LP)} is of the form
``\emph{find $x\in \R_+^n$ such that $Cx \ge c$ and $Px \le p$}''\footnote
{\cite[Lemma 8]{Young01Sequential},
reduces the more general 
$\min\{\lambda : \exists x\in \R_+^n : Cx \ge x; Px \le \lambda p\}$
to a small number of these.}
where $C$ and $P$ are non-negative.
A \emph{$(1+\eps)$-approximate solution} is an $x$
such that $Cx \ge c$ and $Px \le (1+\eps)p$.
Here are some special cases:
\emph{a (non-negative) linear system} 
is of the form ``\emph{find $x\in\Rp^n$ such that $Ax = b$}'';
\emph{pure packing/covering \LP's} are primal and dual\LPs of the form
$\max \{c\cdot x : x\in \Rp^n, Ax \le b\}$
and
$\min \{b\cdot y :y\in \Rp^m, A\tran y \ge c\}$;
\emph{covering with box constraints} 
is of the form
$\min \{ c\cdot x : x\in \Rp^n, Ax \ge b, x\le u\}$.
Above, $A$ must be non-negative.

For facility location (Section~\ref{sec:fl}),
a $(1+\eps)$-approximate solution is
one of cost at most $1+\eps$ times minimum.
(The \LP is not a mixed packing/covering \LP.\@
It has two standard reformulations as a set-cover \LP,
but both increase \LP size super-linearly,
so don't yield nearly linear-time algorithms 
by direct reduction to covering.
See appendix Section~\ref{sec:reductions} and~\cite{hochbaum1982heuristics},~\cite[\S 6.2]{kolen1990covering}.)

\paragraph{Techniques.}
The algorithms are Lagrangian-relaxation algorithms.
Roughly, one starts with an all-zero (or small) vector $x$,
then repeatedly increments $x$ by an increment vector $\delta$.
(The direction of $\delta$ is guided by the gradient 
of a scalar-valued penalty function $\phi(x)$;
the size ensures $x+\delta$ is within a trust region around $x$,
so that $\delta\cdot\nabla \phi(x) = (1+O(\eps))(\phi(x+\delta)-\phi(x))$.)
The penalty function $\phi$ 
combines the constraints into a smooth scalar-valued function
of the current solution $x$.
Ours for mixed packing/covering is roughly 
$\phi(x) \approx \log\big(\sum_i {(1+\eps)}^{P_i x} \times \sum_i {(1-\eps)}^{C_i x}\big)$.

For mixed packing/covering,~\cite{Young01Sequential}
gives an algorithm with time $\widetilde O(md/\eps^2)$
and a parallel algorithm with work $\widetilde O(m d/\eps^2)$,
where $d \le m$ is the maximum number of constraints that any variable appears in.
(Note  $md$ is not generally close to linear.)
\cite{Young01Sequential} uses ideas 
from works on \emph{pure} packing/covering:
round-robin consideration of
variables~\cite{fleischer1999approximating,fleischer2000afmc},
non-uniform increments~\cite{garg1998fas,garg2007faster},
and incrementing multiple variables at once~\cite{luby1993paa}.
For the special case of pure packing/covering,~\cite{koufogiannakis2007beating}
\nocite{Koufogiannakis13Nearly}%
is the first to achieve nearly linear time, $\widetilde O((n+m)/\eps^2 + N)$.
That algorithm randomly couples
primal and dual algorithms (an idea from~\cite{grigoriadis1995str}),
and uses a random sampling trick
to reduce the intermediate calculations.
The algorithms in this paper incorporate and adapt all the above ideas except coupling,
improving (for the first time since 2001) the bounds from~\cite{Young01Sequential}.

\paragraph{Other related work.}
Lagrangian-relaxation algorithms have a large literature~\cite{arora2012multiplicative,bienstock2002pfm,todd2002mfl}.
Generally, they maintain sparsity easily (similar to iterative solvers)
and are numerically stable.

\smallskip
By allowing $1/\eps^2$ dependence on $\eps$, 
the algorithms here achieve near-optimal dependence on the other parameters.
Recent sequential algorithms
building on Nesterov (\cite{nesterov2008rounding})
reduce the dependence to $1/\eps$:
for mixed packing/covering, 
\cite[Thm.~12]{bienstock2006approximating}
achieves 
$\widetilde O\big(n^{2.5} K_p^{1.5} {\max(K_p, K_c)}^{0.5}\, /\, \epsilon\big)$ 
time,
where $K_p$ and $K_c$ are, respectively,
the maximum number of non-zeros in any packing or covering constraint;
\cite[Thm's~3,4,6]{chudak2005improved}
achieves time $\widetilde O\big(m^{1.5} n^{1.5} /\eps\big)$
for facility location, and similar results for pure packing and set cover.
So far, 
the reduced dependence on $\eps$ 
always comes at the expense of polynomial (super-linear) dependence on other parameters,
so is asymptotically slower unless $1/\epsilon$ 
is growing polynomially with $N$.

\smallskip

For the special case of pure packing/covering, 
a recent parallel algorithm~\cite{allen2014using}
achieves near-linear work $O(N\log^2(N)/\eps^3)$,
and time $O(\log^2 (N)/\eps^3)$,
breaking the $1/\eps^4$ time barrier for polylog-time parallel algorithms.
The algorithm does not apply to \emph{mixed} packing/covering.

\smallskip

Solving linear systems (``\emph{find $x$ such that $Ax = b$}'')
is a fundamental algorithmic problem.  
The algorithms here find an approximate solution in nearly linear time
(or in parallel doing near-linear-work)
for the special case when $A$ and the solution $x$ are non-negative
 (e.g., for computer tomography~\cite{basu2000n,Young01Sequential}).
Another important special case is when $A$ is a 
\emph{graph Laplacian}~\cite{goemans2011mathematical,kalai2010work,spielman2004nearly,spielman2006nearly}.
(Note that graph Laplacians are not non-negative,
so the form of approximation differs.)
Those solvers are a basic building block in many settings,
one celebrated recent example being the nearly linear-time approximation algorithm 
for maximum flow~(\cite{christiano2011electrical}
and surveys~\cite{koutis2012fast,vishnoi2012laplacian}).

\smallskip
\emph{Online} mixed packing/covering was recently shown 
to have polylog-competitive algorithms~\cite{azar2013online}.

\paragraph{Future work.}
For pure packing/covering,
\cite{koufogiannakis2007beating,Koufogiannakis13Nearly}
achieves time  $O((n+m)\log(N)/\eps^2 + N)$,
shifting the $1/\eps^2$ factor to a lower-order term
for dense instances.
Is this possible for mixed packing/covering or facility location?
So far the primal/dual coupling used 
in~\cite{koufogiannakis2007beating,Koufogiannakis13Nearly}
eludes efficient extension to mixed packing/covering.




\section{Mixed Packing and Covering}\label{sec:pc}

\begin{theorem}\label{thm:pc}
  For mixed/packing covering, there are $(1+\eps)$-approximation algorithms
  running
  
  \noindent 
  (i) in time $O(N\log(m)/\eps^2)$,

  \noindent 
  (ii) in parallel time $O(\log N\,\log^2 m\, \log(n\log(m)/\eps)/\eps^4)$,
  doing work $O(N \log(n\log(m)/\eps)\log(m)/\eps^2)$.
\end{theorem}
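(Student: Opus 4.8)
The plan is to follow the Lagrangian-relaxation framework sketched in the "Techniques" paragraph, instantiating the penalty function $\phi(x)\approx\log\big(\sum_i (1+\eps)^{P_i x}\sum_i (1-\eps)^{C_i x}\big)$ and carefully analyzing a width-controlled incremental scheme. First I would normalize the input: rescale rows so that each packing constraint reads $P_i x\le 1$ and each covering constraint reads $C_i x\ge 1$, and rescale columns/set a scalar so that the all-zero start is feasible-ish and the target value is $1$; this is the standard preprocessing that makes "$Px\le(1+\eps)p$ while $Cx\ge c$" the thing to certify. Then I would define $\phi(x)=\log\big(\sum_i (1+\eps)^{P_i x}\big)+\log\big(\sum_i (1-\eps)^{-C_i x}\big)$ (a smoothed max of the packing violations plus a smoothed max of the covering deficits), observe $\nabla\phi(x)$ has the form $P\tran u - C\tran v$ for probability-like weight vectors $u,v$ over packing and covering constraints, and choose the increment direction to be a single coordinate $j$ (or a batch of coordinates, for the parallel version) for which the "profit rate" $(C\tran v)_j - (P\tran u)_j$ is large relative to the scale $\max(P_{\cdot j}, \ldots)$ — i.e. the coordinate that, per unit of packing penalty incurred, buys the most covering progress.

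The core of the argument is the standard potential/invariant analysis. I would track the quantity $\Phi(x)=\log\big(\sum_i (1+\eps)^{P_i x}\big)-(1-\eps)\log\big(\sum_i (1-\eps)^{-C_i x}\big)$ or, more cleanly, a ratio $\log\Lambda_P(x)/\log\Lambda_C(x)$ of the packing and covering "soft-max" values, and show it is nonincreasing step-to-step provided the LP is feasible at value $1$: feasibility guarantees a point $x^*$ with $Cx^*\ge c$, $Px^*\le p$, and convexity of $\phi$ gives $\delta\cdot\nabla\phi(x)\le \phi(x^*)-\phi(x)$-type bounds, so there always exists a coordinate with positive profit rate, and incrementing it by a step small enough to stay in the trust region makes $\delta\cdot\nabla\phi(x)=(1\pm O(\eps))(\phi(x+\delta)-\phi(x))$ (this is exactly the "trust-region" remark in the excerpt, justified by the fact that over the step each exponent $P_i x$ changes by $O(\eps)$). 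Summing the per-step inequalities and using that the potential starts at $O(\log m)$ and cannot drop below $-O(\log m/\eps)$ before a covering constraint is satisfied, I get that after $O(\log(m)/\eps^2)$ "phases" (each phase raising every active covering constraint by a $\approx(1+\eps)$ factor, or equivalently $O(\log m)$ phases with $O(1/\eps)$ granularity inside each) the output vector, possibly scaled, is $(1+\eps)$-approximate. The running-time bound then comes from maintaining the weights $u,v$ and the products $P_i x$, $C_i x$ incrementally: each increment of coordinate $j$ touches only the $N_j$ nonzeros in column $j$, and the key accounting is that the total number of coordinate-increments, summed over the run, is $O(\text{(number of phases)}\times\text{(work per phase)})=O(N\log(m)/\eps^2)$, using a round-robin / non-uniform-increment scheduling (as in \cite{fleischer1999approximating,garg1998fas}) so that no coordinate is revisited wastefully, plus the random-sampling trick of \cite{koufogiannakis2007beating} to evaluate the gradient approximately in time proportional to the number of nonzeros actually incremented rather than $N$ per step.

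For part (ii) I would parallelize by incrementing, in each round, \emph{all} coordinates whose profit rate exceeds (say) half the current maximum profit rate, each by its own trust-region-limited step; the batching loses only constant factors in the potential argument because the penalty function's smoothness bounds the cross-terms, and a binary-search / bucketing over the $O(\log(n\log(m)/\eps))$ possible scales of the profit rates lets each round be implemented in $O(\log N)$ depth. The round count becomes $O(\log^2(m)/\eps^4)$ — one $\log m/\eps^2$ from the sequential phase bound and an extra $\log m /\eps^2$ from the fact that a parallel round only guarantees progress on coordinates near the top bucket — giving the stated depth $O(\log N\,\log^2 m\,\log(n\log(m)/\eps)/\eps^4)$, and the work bound $O(N\log(n\log(m)/\eps)\log(m)/\eps^2)$ follows from the same nonzero-touching accounting as the sequential case, with the extra $\log$ factor from the bucketing/sampling data structures.

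The main obstacle I expect is the width/trust-region control for the \emph{mixed} problem: unlike pure packing or pure covering, here incrementing a coordinate simultaneously helps covering constraints and hurts packing constraints, so the step size must be small enough that \emph{neither} the $(1+\eps)^{P_i x}$ terms nor the $(1-\eps)^{-C_i x}$ terms move by more than a $(1+O(\eps))$ factor, and one must show that such a step still makes first-order progress that is a $(1-O(\eps))$ fraction of the ideal — equivalently, that the soft-max gradient genuinely certifies a good direction even though the feasible region is a general polytope rather than a simplex or box. Making the nonuniform-increment scheduling compatible with this two-sided width constraint (so that the total increment count stays $O(N\log(m)/\eps^2)$ and does not blow up by an extra $1/\eps$ or $\log$ factor), and verifying that the random-sampling gradient estimate of \cite{koufogiannakis2007beating} has small enough variance in the mixed setting that Chernoff bounds still give correctness with high probability, are the two places where the argument needs genuine care rather than routine adaptation.
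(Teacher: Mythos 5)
There is a genuine gap: your proposal reproduces the framework that was already known from~\cite{Young01Sequential} (the smoothed potential, trust-region steps, the $O(\log(m)/\eps^2)$ phase bound), but it does not supply the mechanism that actually gets the running time down to $O(N\log(m)/\eps^2)$, which is the new content of part~(i). You defer to ``the random-sampling trick of~\cite{koufogiannakis2007beating}'' to evaluate gradients cheaply, and then yourself flag that verifying its variance in the mixed setting ``needs genuine care'' --- but the paper does not use random sampling at all. It maintains deterministic estimates $\est P_i, \est C_i$ accurate to within an additive $1$ (Invariant~\eqref{pc:imp:invariant}) by a \emph{periodic} scheme: the nonzeros of each column are grouped by the nearest power of two (their ``tops''), and during a run of increments to $x_j$ a group with top $2^t$ is refreshed only once the accumulated increase in $x_j$ reaches $1/2^{t+1}$, with a cascade rule that stops at the first ineligible group. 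Two things then have to be proved that your sketch does not address: that this lazy rule really keeps every estimate within additive $1$ (the paper's ``correctness of periodic sampling'' argument, a small induction over consecutive tops), and the amortized charge --- each refresh of $\est P_i$ raises it by at least $1/4$, and $\est P_i \le (1+O(\eps))U$ throughout, so the total number of refreshes is $O(mU) = O(NU)$. Without some such accounting, ``each increment touches only the nonzeros of column $j$'' gives a bound of (number of increments)$\times$(column size), which is $O(mU)\times$(column size) and is not near-linear.

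The parallel part has an analogous hole. The paper's depth and work bounds come from a specific design choice: increments are \emph{multiplicative}, $\delta_j = z\,x_j$, with $x_j$ initialized to $n^{-1}/\max_i P_{ij}$ and never exceeding $O(U/\max_i P_{ij})$, and $z = \Omega(1/U)$; hence any fixed $j$ can participate in at most $O(U\log(nU))$ increments over the whole run, which simultaneously bounds the work charged to each nonzero $(i,j)$ and (since some $j$ survives in $J$ for an entire phase) the number of increments per phase. Your alternative --- bucketing coordinates by profit rate and arguing an extra $\log m/\eps^2$ factor from ``progress only on the top bucket'' --- is not obviously wrong, but it is not backed by an argument bounding how many rounds a coordinate can stay active, which is exactly where the $\log(n\log(m)/\eps)$ factor in the stated work and time bounds comes from. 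As written, neither the sequential nor the parallel time bound is established.
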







\begin{algorithm}[t]
  \caption{Generic approximation algorithm for mixed packing/covering \LPs\label{alg:pc}}
  \begin{algorithmic}[1]
    \Function{Packing-Covering}{matrices $C$, $P$; initial solution $x=x^0\in\R_+^n$, $\eps\in(0,1/10)$}
    \State\label{pc:defn}
    {\renewcommand{\arraystretch}{1.3}
      \begin{tabular}[t]{@{}lr@{\,~}l@{~~~~~}l@{~\,}r@{\,~}l}
        Define:& $U$ &$=~ \lefteqn{\textstyle(\max_i P_{i} x^0 + \ln m)/\eps^2}$ \\
        & $p_i(x)$ & $=\, {(1+\eps)}^{P_i x}$
        &  &$c_i(x)$ & $=\, {(1-\eps)}^{C_i x}$~~if~$C_i \,x\, \le\, U$, else $c_i(x)=0$\\ 
        & $|p(x)|$ & $=\, |p(x)|_1$ 
        &  &$|c(x)|$ & $=\, |c(x)|_1 \,=\, \sum_{i} c_i(x)$\\
        & $\lambda(x, j)$ & $=\, P\tran_j\, p(x) / C\tran_j\, c(x)$
        &  &$\lambda^*(x)$ & $=\, \min_{j\in [n]} \lambda(x,j)$.
        \\[3pt]
      \end{tabular}
    }
    \State Initialize $\lambda_0 \gets |p(x)|/|c(x)|$.
    \Block Repeatedly do {\bf either}
    of the following two operations whose precondition is met:
    \Block {\bf operation (a): increment $x$}\label{op:a}
    \Comment{\emph{precondition:}
      {$\lambda^*(x) \,\le\, (1+4\eps)\lambda_0$}}
    \State Choose $\delta\in\R_+^n$ such that
    \State~~~(i) $\forall j \in [n]$, if $\delta_j > 0$ then $\lambda(x,j) \le (1+4\eps)\lambda_0$, and
    \State~~(ii) $\max\{ \max_i P_i\,\delta, \max_{i: C_i x\le U} C_i\,\delta\} $ 
    is in $[1/2,1]$
    \State~~~~~~~(the maximum increase in any $P_i x$ or active $C_i x$ 
    is between $1/2$ and $1$). 
    \State Let $x \gets x + \delta$\label{pc:update}.
    \State If $\min_{i} C_i x \ge U$ then return $x/U$.
    \EndBlock\label{op:a:end}
    \Block {\bf operation (b): scale $\lambda_0$}
    \Comment{
      \emph{precondition:}\label{op:b}
      {$\lambda^*(x) \,\ge\, (1+\eps)\lambda_0$}
      \State Let $\lambda_0 \gets (1+\eps)\lambda_0$.\label{pc:lambda}
    }
    \EndBlock\label{op:b:end}
    \EndBlock
    \EndFunction
  \end{algorithmic}
\end{algorithm}

The rest of this section proves Thm.~\ref{thm:pc}. 
The starting point is Alg.~\ref{alg:pc} (above),
which is essentially a convenient reformulation of the generic ``algorithm with phases''
in~\cite[Fig.~2]{Young01Sequential}.
For part (i), we'll describe how to implement it to run faster
by only estimating the intermediate quantities of interest
and updating the estimates periodically.
We'll need the following properties (essentially from~\cite{Young01Sequential}):

\begin{lemma}\label{lemma:pc} 
  Given any feasible mixed packing/covering instance $(P,C)$, Alg.~\ref{alg:pc}
  \\(i) returns a $(1+O(\eps))$-approximate solution
  (i.e., $x$ such that $Cx \ge 1$ and $Px \le 1+O(\eps)$),
  \\(ii) scales $\lambda_0$ (lines~\ref{op:b}--\ref{op:b:end}) at most $O(U)$ times, 
  where $U = O(\log(m)/\eps^2 + \max_i P_i x^0/\eps)$, and
  \\(iii) increments $x$ (lines~\ref{op:a}--\ref{op:a:end}) at most $O(m\, U)$ times.
\end{lemma}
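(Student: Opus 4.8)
The proof is a standard Lagrangian-relaxation potential argument. I would track the soft-max surrogate $\pi(x)=\ln|p(x)|$ for the packing constraints (so $(1+\eps)^{\max_i P_i x}\le|p(x)|\le m\,(1+\eps)^{\max_i P_i x}$, hence $\pi(x)/\ln(1+\eps)$ is within additive $\log_{1+\eps}m=O(\log(m)/\eps)$ of $\max_i P_i x$) and the soft-min surrogate $-\ln|c(x)|$ for the active covering constraints (so $(1-\eps)^{\min_i C_i x}\le|c(x)|\le m\,(1-\eps)^{\min_i C_i x}$ whenever $\min_i C_i x\le U$). I would first record the derivative identities $\partial_j|p(x)|=\ln(1+\eps)\,P_j\tran p(x)$ and $\partial_j|c(x)|=\ln(1-\eps)\,C_j\tran c(x)$ (where the $c_i$'s are smooth), and, via elementary one-variable estimates, the step bounds --- valid whenever $\delta$ respects operation~(a)'s trust-region condition (no $P_i x$ or active $C_i x$ rises by more than $1$), and, for covering, even when some $C_i x$ crosses $U$ (then $c_i$ only drops further) ---
\[
|p(x+\delta)|\le|p(x)|+(1+\eps)\ln(1+\eps)\textstyle\sum_j\delta_j P_j\tran p(x),\qquad
|c(x+\delta)|\le|c(x)|-(1-\eps)|\ln(1-\eps)|\textstyle\sum_j\delta_j C_j\tran c(x).
\]
Next I would use feasibility: if $x^*\ge 0$ has $Cx^*\ge\mathbf 1$ and $Px^*\le\mathbf 1$, then $\sum_j x^*_j P_j\tran p(x)=\sum_i(Px^*)_i p_i(x)\le|p(x)|$ and $\sum_j x^*_j C_j\tran c(x)=\sum_i(Cx^*)_i c_i(x)\ge|c(x)|$, so $\lambda^*(x)\le|p(x)|/|c(x)|$ at every reachable $x$. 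Together with the preconditions this gives the key invariant $\lambda_0\le|p(x)|/|c(x)|$ throughout: it holds with equality initially, operation~(a) only increases the right-hand side, and operation~(b) raises $\lambda_0$ to at most $\lambda^*(x)\le|p(x)|/|c(x)|$.

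The heart of the proof is the potential $\Phi(x)=\ln|p(x)|+(1+c\eps)\ln|c(x)|$ for a suitable absolute constant $c$ (about $6$). I would show $\Phi$ is nonincreasing under operation~(a) and unchanged under operation~(b). For operation~(a), put $B=\sum_j\delta_j C_j\tran c(x)$, which is positive since operation~(a) only moves coordinates $j$ with $\lambda(x,j)\le(1+4\eps)\lambda_0<\infty$, hence with $C_j\tran c(x)>0$; that bound on $\lambda(x,j)$ together with the invariant gives $P_j\tran p(x)\le(1+4\eps)\tfrac{|p(x)|}{|c(x)|}C_j\tran c(x)$ on the support of $\delta$, hence $\sum_j\delta_j P_j\tran p(x)\le(1+4\eps)\tfrac{|p(x)|}{|c(x)|}B$; feeding this into the step bounds gives $\Delta\Phi\le\tfrac{B}{|c(x)|}\big[(1+\eps)(1+4\eps)\ln(1+\eps)-(1+c\eps)(1-\eps)|\ln(1-\eps)|\big]$, and since $\ln(1+\eps)\le|\ln(1-\eps)|$ the bracket is $\le 0$ for a suitable constant $c$. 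Then $\Phi(x)\le\Phi(x^0)$, and rearranging (using $|c(x^0)|\le m$, $|p(x^0)|\le m(1+\eps)^{\max_i P_i x^0}$, $|p(x)|\ge(1+\eps)^{\max_i P_i x}$, and $|c(x)|\ge(1-\eps)^{\min_i C_i x}$ when $\min_i C_i x\le U$) gives, at every $x$ in the loop,
\[
\max_i P_i x\ \le\ \max_i P_i x^0+O\!\big(\log(m)/\eps\big)+(1+O(\eps))\min_i C_i x.
\]
Since $U$ is chosen so that $\max_i P_i x^0+O(\log(m)/\eps)=O(\eps U)$, and $\min_i C_i x\le U$ throughout the loop (hence $\le U+1$ at the instant of return), this is $\max_i P_i x\le(1+O(\eps))U$ everywhere. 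Part~(i) follows: on return $C(x/U)\ge\mathbf 1$ by the return test and $P(x/U)\le 1+O(\eps)$ by the last bound; at every step some precondition holds because $1+\eps<1+4\eps$; and the algorithm does terminate by parts~(ii)--(iii).

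For part~(ii), each scaling multiplies $\lambda_0$ by $1+\eps$, so the number of scalings is $\log_{1+\eps}(\lambda_0^{\mathrm{final}}/\lambda_0^{\mathrm{init}})$; the invariant and the bound on $\max_i P_i x$ give $\lambda_0^{\mathrm{final}}\le|p(x)|/|c(x)|\le m\,(1+\eps)^{(1+O(\eps))U}(1-\eps)^{-U}$ (using $|c(x)|>(1-\eps)^U$ in the loop), while $\lambda_0^{\mathrm{init}}=|p(x^0)|/|c(x^0)|\ge(1+\eps)^{\max_i P_i x^0}/m$; taking $\log_{1+\eps}$, and using $|\ln(1-\eps)|/\ln(1+\eps)=1+O(\eps)$ and $U\ge\log(m)/\eps^2$, yields $O(\log(m)/\eps+U)=O(U)$. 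For part~(iii), I would use the counting potential $T(x)=\sum_i\min(P_i x,U')+\sum_i\min(C_i x,U)$, where $U'$ is an upper bound on $\max_i P_i x$ over the run, which by the previous paragraph is $O(U)$; then $T(x)=O(mU)$, and operation~(a)'s trust-region condition forces some coordinate value among $\{P_i x\}$ and the active $\{C_i x\}$ to rise by at least $\tfrac12$, so each increment either raises $T$ by at least $\tfrac12$ (when that coordinate is a packing row, or a covering row with $C_i x\le U-\tfrac12$) or pushes some $C_i x$ above $U$ (deactivating that covering row, which happens at most $m$ times in all); hence the number of increments is at most $2\max T+m=O(mU)$.

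The step I expect to be the main obstacle is the $\Delta\Phi\le0$ computation for operation~(a): this is where the trust-region condition must be invoked to keep the second-order error of the multiplicative update inside a $(1+O(\eps))$ factor, and where operation~(a)'s increment-direction condition together with the invariant $\lambda_0\le|p(x)|/|c(x)|$ must be combined just tightly enough that --- after absorbing all the slack factors $1+\eps$, $1+4\eps$ and the gap between $\ln(1+\eps)$ and $|\ln(1-\eps)|$ into the single constant $c$ --- the first-order growth of $\ln|p(x)|$ is dominated by the first-order decay of $\ln|c(x)|$. Everything else --- the averaging argument, the invariant, and the two counting potentials --- is routine bookkeeping around that inequality.
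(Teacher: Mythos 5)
Your proposal is correct and follows essentially the same route as the paper's proof: the same averaging argument gives $\lambda^*(x)\le|p(x)|/|c(x)|$ and hence the invariant $\lambda_0\le|p(x)|/|c(x)|$, your potential $\Phi$ is just the paper's invariant~\eqref{pc:invariant} rewritten in natural-log form (with the same first-order step bounds on $\lmax p$ and $\lmin c$), and parts (ii) and (iii) use the same ratio-of-$\lambda_0$ and per-row progress counting arguments. The only differences are cosmetic (you are slightly more explicit about the crossing-$U$ edge cases and the exact constant $c$, which needs to be a bit larger than $6$ but remains an absolute constant).
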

See the appendix for a proof,
which follows the proofs of Lemmas 1--5 of~\cite{Young01Sequential}. 
After initialization, Alg.~\ref{alg:pc} simply 
repeats one of two operations: {\bf (a)}
incrementing the current solution $x$ by some vector $\delta$,
or {\bf (b)} scaling $\lambda_0$ by $1+\eps$.
In each iteration, it can do either operation whose precondition is met,
and when incrementing $x$ there are many valid ways to choose $\delta$.

\subsection{Proof of part (i), sequential algorithm}
Alg.~\ref{alg:pc:imp}, which we use to prove part (i) of Thm.~\ref{thm:pc},
repeats these two operations in a particular way.
To reduce the run time to nearly linear,
instead of computing $Px$, $Cx$, $p(x)$ and $c(x)$ exactly as it proceeds,
Alg.~\ref{alg:pc:imp} maintains estimates:
$\widehat P$, $\widehat C$, $\widehat p$, and $\widehat c$.
To prove correctness,
we show that the estimates suffice to ensure
a correct implementation of Alg.~\ref{alg:pc},
which is correct by Lemma~\ref{lemma:pc} (i).

\begin{algorithm}[t]
  \caption{Sequential implementation 
    of Alg.~\ref{alg:pc}\label{alg:pc:imp}
    for mixed packing/covering \LPs}
  \begin{algorithmic}[1]
    \Function{Sequential-Packing-Covering}{$P, C, \eps$}
    \State Initialize $x_{j} \gets 0$ for $j\in [n]$,
    $\lambda_0 \gets |p(x)|/|c(x)| = m_p/m_c$,
    and $U = \ln(m)/\eps^2$.
    \State\label{pc:imp:estimate1} 
    Maintain vectors $\est P$, $\est C$, $\est p$, and $\est c$, to satisfy invariant
    \vspace*{-1.5ex}
    \begin{equation}\label{pc:imp:invariant}
      \begin{array}{ll}
        \text{For all } i:
        &
        \begin{array}{@{~~~}l@{~}l@{~}l@{~~~~~}l@{~}l@{~}l}
          \est P_i  &\in& (P_i\, x - 1, P_i\,x]
          &
          \est p_i &=& {(1+\eps)}^{\est P_i},
          \\[0pt]
          \est C_i &\in& (C_i \,x - 1, C_i\, x]
          &
          \est c_i &=& {(1-\eps)}^{\est C_i} \text{ if } \est C_i \le U, \text{ else } \est c_i=0.
        \end{array}
      \end{array}
      \vspace*{-0.39in}
    \end{equation}

    \State
    \Repeat
    \For{each $j\in [n]$}
    \Comment{do a \emph{run} for $x_j$}
    \State\label{pc:imp:lambda}\label{pc:imp:b}%
    Compute values
    of $P\tran_j\, \est p$ 
    and $C\tran_j\, \est c\,$ 
    from $\est p$ and $\est c$.
    Define $\est \lambda_j\,=\, P\tran_j\, \est p \,/\, C\tran_j\, \est c$.
    \While{$\est \lambda_j \le {(1+\eps)}^2\lambda_0/{(1-\eps)}$}
    \Block {\bf operation (a): increment $x_j$}\label{op:aj}
    \Comment{\emph{assertion:}
      {$\lambda^*(x) \,\le\, (1+4\eps)\lambda_0$}}
    \State\label{pc:imp:z} Let $x_j \gets x_j + z$, 
    choosing $z$ 
    so $\max\{ \max_i P_{ij}\,z, \max_{i: C_i x\le U} C_{ij}\,z\} = 1/2$.
    \Block\label{pc:imp:estimate3} \textbf{As described in text}, 
    to maintain Invariant~\eqref{pc:imp:invariant}:
    \State\label{pc:imp:estimate4} For selected $i$ with $P_{ij}\ne 0$,
    update $\est P_i$ and
    $\est p_i$.  Update $P\tran_j \est p$ accordingly.
    \State\label{pc:imp:estimate5} For selected $i$ with $C_{ij}\ne 0$,
    update $\est C_i$ and $\est c_i$.
    Update $ C\tran_j \est c$ accordingly.
    \EndBlock
    \State If $\min_{i} \est C_i \ge U$ then {\bf return} $x/U$ 
    \Comment{finished}\label{pc:imp:return}
    \EndBlock\label{pc:imp:b:end}
    \EndWhile 
    \EndFor
    \Block {\bf operation (b): scale $\lambda_0$}
    \Comment{\emph{assertion:}\label{op:b2}
      {$\lambda^*(x) \,\ge\, (1+\eps)\lambda_0$}}
    \State Let $\lambda_0 \gets (1+\eps)\lambda_0$.\label{pc:imp:a}
    \EndBlock
    \EndRepeat
    \EndFunction 
  \end{algorithmic}
\end{algorithm}

\begin{lemma}\label{lemma:pc:imp:correct}
  Given any feasible packing/covering instance $(P,C)$, 
  provided the updates in lines~\ref{pc:imp:estimate4}--\ref{pc:imp:estimate5} 
  maintain Invariant~\eqref{pc:imp:invariant}:

  \noindent (i)~~Each operation (a) or (b) done by Alg.~\ref{alg:pc:imp}
  is a valid operation (a) or (b) of Alg.~\ref{alg:pc}, so

  \noindent (ii) Alg.~\ref{alg:pc:imp} returns a $(1+O(\eps))$-approximate solution.
\end{lemma}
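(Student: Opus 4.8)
The goal is to show that Algorithm~\ref{alg:pc:imp}, which works only with the estimates $\est P$, $\est C$, $\est p$, $\est c$, faithfully simulates a valid execution of the generic Algorithm~\ref{alg:pc}, so that Lemma~\ref{lemma:pc}(i) applies. Part (ii) is then immediate from part (i) together with Lemma~\ref{lemma:pc}(i), so all the work is in part (i): verifying that every operation (a) and (b) actually performed is a legal operation of Algorithm~\ref{alg:pc}, i.e.\ that its stated precondition holds at the moment it is executed. For this I would exploit the invariant~\eqref{pc:imp:invariant}, which pins each estimate within a multiplicative window of the true value: since $\est P_i \in (P_i x - 1, P_i x]$ we get $\est p_i = (1+\eps)^{\est P_i} \in ((1+\eps)^{-1} p_i(x), p_i(x)]$, and similarly $\est c_i \in ((1-\eps)^{-1} c_i(x), c_i(x)] \subseteq (c_i(x), (1-\eps)^{-1} c_i(x)]$ wait --- more carefully, $(1-\eps)^{\est C_i}$ with $\est C_i \le C_i x$ gives $\est c_i \ge c_i(x)$ and $\est c_i \le (1-\eps)^{-1} c_i(x)$. (One must also check the truncation thresholds line up: $\est C_i \le U$ versus $C_i x \le U$; since $\est C_i \le C_i x$, whenever the true $c_i$ is nonzero-truncated the estimate may still be active, and the invariant explicitly ties $\est c_i = 0$ to $\est C_i \le U$, so this is consistent by design.) Summing, $\est\lambda_j = P_j\tran \est p / C_j\tran \est c$ and the true $\lambda(x,j) = P_j\tran p(x)/C_j\tran c(x)$ differ by at most a factor $(1+\eps)(1-\eps)$ in one direction and $1$ in the other --- concretely $\lambda(x,j) \le (1+\eps)\,\est\lambda_j \cdot$ wait, let me just say: there are constants so that $\est\lambda_j$ and $\lambda(x,j)$ agree up to factors of the form $(1\pm\eps)^{\pm 1}$.

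With that estimate in hand the two verifications go as follows. For \textbf{operation (a)}: it is executed only inside the \textsc{while} test $\est\lambda_j \le (1+\eps)^2\lambda_0/(1-\eps)$. Combining with the relation between $\est\lambda_j$ and $\lambda(x,j)$ and the fact that $\lambda^*(x) \le \lambda(x,j)$, a short calculation with $\eps < 1/10$ shows $\lambda^*(x) \le (1+4\eps)\lambda_0$, which is exactly the precondition of operation (a) in Algorithm~\ref{alg:pc}; and when $\delta$ is the vector that is zero except for the single increment $z$ in coordinate $j$, condition (i) of operation (a) ($\delta_k > 0 \Rightarrow \lambda(x,k) \le (1+4\eps)\lambda_0$) reduces to the same bound for $k = j$, while condition (ii) ($\max\{\max_i P_i\delta, \max_{i: C_i x \le U} C_i\delta\} \in [1/2,1]$) is met on the nose by the choice $z$ with that maximum equal to $1/2$ in line~\ref{pc:imp:z}. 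The one wrinkle is that line~\ref{pc:imp:z} uses the condition $C_i x \le U$ (the \emph{true} value), which the algorithm does not directly know; I would note that this is only used to identify which covering rows are ``active,'' and the invariant guarantees $\est C_i \le U \iff$ (morally) the row is still relevant, so one picks $z$ against the estimated-active set, which by the invariant is a superset/subset controlled tightly enough that condition (ii) still holds with the maximum in $[1/2,1]$ (possibly slightly below $1/2$ against the true active set, still inside $[1/4,1]$, and one absorbs the slack into the $O(\eps)$). For \textbf{operation (b)}: it is executed only after a full round-robin pass in which no coordinate $j$ passed the \textsc{while} test, i.e.\ $\est\lambda_j > (1+\eps)^2\lambda_0/(1-\eps)$ for every $j$; pushing this through the $\est\lambda_j$-vs-$\lambda(x,j)$ relation gives $\lambda(x,j) \ge (1+\eps)\lambda_0$ for all $j$, hence $\lambda^*(x) \ge (1+\eps)\lambda_0$, the precondition of operation (b). Finally, the early return in line~\ref{pc:imp:return} triggers on $\min_i \est C_i \ge U$, which by the invariant implies $\min_i C_i x \ge U$ --- wait, it implies $\min_i C_i x \ge \min_i \est C_i \ge U$, matching the return condition of Algorithm~\ref{alg:pc} (line: ``if $\min_i C_i x \ge U$ then return $x/U$''), so the returned $x/U$ is the same $(1+O(\eps))$-approximate solution guaranteed by Lemma~\ref{lemma:pc}(i).

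\textbf{Main obstacle.} The bookkeeping is routine; the real care is in the chain of multiplicative slacks. The generic Algorithm~\ref{alg:pc} has preconditions phrased with the factor $1+4\eps$, while the implementation's \textsc{while}-threshold is $(1+\eps)^2/(1-\eps)$, and one must check that $(1+\eps)^2/(1-\eps) \cdot (\text{estimate-to-truth factor}) \le 1+4\eps$ on one side and $\ge 1+\eps$ on the other, for all $\eps \in (0, 1/10)$. Since the estimate-to-truth factor is itself of order $(1+\eps)(1-\eps)^{-1} = 1 + 2\eps + O(\eps^2)$, the relevant inequalities are of the form $1 + O(\eps^2) \le 1+4\eps$ and need $1/10$-style slack; I would verify them by a crude bound like $(1+\eps)^3/(1-\eps)^2 \le 1 + 10\eps \le \cdots$ rather than exact arithmetic, and remark that the constants in ``$1+4\eps$'' were chosen in~\cite{Young01Sequential} precisely to leave room for exactly this kind of estimation error. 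The only genuinely non-mechanical point is the one flagged above: that the algorithm's use of the true predicate $C_i x \le U$ in line~\ref{pc:imp:z} must be reconciled with the fact that it only has access to $\est C_i$; I would handle this by observing that the invariant makes the estimated active set and the true active set differ only on rows with $C_i x \in (U, U+1]$, where $c_i(x)$ is already exponentially negligible (being $\le (1-\eps)^U = m^{-\Omega(1)}$), so their contribution to $\max_i C_i \delta$ over the active set is immaterial and condition (ii) of operation (a) survives intact. Everything else is a direct substitution.
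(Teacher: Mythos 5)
Your proposal is correct and follows essentially the same route as the paper's own (very terse) proof: use Invariant~\eqref{pc:imp:invariant} to bound $\est\lambda_j$ against $\lambda(x,j)$ within $(1\pm\eps)^{O(1)}$ factors, conclude that the \textsc{while} test certifies the precondition of operation (a) and its failure for all $j$ certifies the precondition of operation (b), and then invoke Lemma~\ref{lemma:pc}(i). If anything you are more explicit than the paper about the loose ends (the exact constant in $(1+\eps)^3/(1-\eps)^2$ versus $1+4\eps$, and the mismatch between the predicates $\est C_i\le U$ and $C_i x\le U$), both of which the paper dispatches with an unproved parenthetical and a footnote.
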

The proof is in the appendix.
The proof follows~\cite{Young01Sequential},
but adds the idea
from~\cite{koufogiannakis2007beating,Koufogiannakis13Nearly}
of maintaining estimates by sampling.
(Here we maintain the estimates differently, though,
using deterministic, periodic sampling
as detailed in the next proof.)

\begin{lemma}\label{lemma:pc:imp:sampling}
  Alg.~\ref{alg:pc:imp} can do the updates 
  in lines~\ref{pc:imp:estimate3}--\ref{pc:imp:estimate4} 
  so as to maintain Invariant~\eqref{pc:imp:invariant}
  and take total time $O(N \log(m)/\eps^2)$.
\end{lemma}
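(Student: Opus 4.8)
The plan is to bound the total work by amortizing the cost of the updates in lines~\ref{pc:imp:estimate4}--\ref{pc:imp:estimate5} against the ``progress'' made on the exact quantities $P_i x$ and $C_i x$. The key design choice is to use \emph{deterministic periodic} updating: for each constraint index $i$, refresh the estimate $\est P_i$ (resp.\ $\est C_i$) and the associated $\est p_i$ (resp.\ $\est c_i$) only once $P_i x$ (resp.\ $C_i x$) has increased by at least a full unit since the last refresh for that $i$. Since each increment of $x_j$ in line~\ref{pc:imp:z} increases the \emph{maximum} relevant coordinate of $Px$ or $Cx$ by exactly $1/2$, we have $\est P_i \in (P_i x - 1, P_i x]$ immediately after a refresh, and the gap can grow only up to the next threshold; thus Invariant~\eqref{pc:imp:invariant} is maintained by construction. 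When we refresh $\est P_i$, we also update $\est p_i = (1+\eps)^{\est P_i}$, and correspondingly adjust every stored inner product $P\tran_j\est p$ for $j$ with $P_{ij}\ne 0$, each adjustment taking $O(1)$ time; symmetrically for $\est C_i$, $\est c_i$, and $C\tran_j\est c$. (Here we rely on the convention that once $\est C_i > U$ we set $\est c_i = 0$ and need never touch $i$ again on the covering side.)

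Next I would count refreshes. Each refresh of index $i$ on the packing side is triggered by a unit increase in $P_i x$; by Lemma~\ref{lemma:pc}(iii) the algorithm performs $O(mU)$ increments, but more usefully, the number of unit increases in $P_i x$ over the whole run is $O(1 + P_i x^{\mathrm{final}})$, and $P_i x^{\mathrm{final}} = O(U)$ since the algorithm returns $x/U$ as soon as $\min_i C_i x \ge U$ and, by the approximation guarantee of Lemma~\ref{lemma:pc:imp:correct}(ii), every $P_i x$ stays $O(U)$ throughout. So index $i$ is refreshed $O(U)$ times on each side. Each such refresh costs $O(\nz_i)$ where $\nz_i$ is the number of nonzeros in row $i$ of the relevant matrix (to fix up all affected inner products). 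Summing, the total update cost is $\sum_i O(U\cdot \nz_i) = O(U \cdot N) = O(N\log(m)/\eps^2)$, using $U = \ln(m)/\eps^2$ from the initialization in Alg.~\ref{alg:pc:imp}. The remaining bookkeeping — recomputing $\est\lambda_j$ in line~\ref{pc:imp:b} and checking the return condition — is charged either to the runs (one $O(\nz_j)$ recomputation per run, and the number of runs is $O(nU)$-ish, subsumed) or folded into the refresh accounting, and does not dominate.

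The main obstacle will be making the amortization airtight across \emph{all} indices simultaneously: an increment of a single variable $x_j$ can push several $P_i x$ past their thresholds at once, and I must ensure the per-refresh $O(\nz_i)$ charge is attributed to a genuine unit of progress in $P_i x$ that is never double-counted, while also confirming that $P_i x^{\mathrm{final}} = O(U)$ for \emph{every} $i$ (not just the covering rows that drive termination) — this is where Lemma~\ref{lemma:pc:imp:correct}(ii) is essential, since it guarantees $Px \le 1 + O(\eps)$ after dividing by $U$, i.e.\ $P_i x \le (1+O(\eps))U$ before. A secondary technical point is the data structure for line~\ref{pc:imp:b}: to test $\est\lambda_j \le (1+\eps)^2\lambda_0/(1-\eps)$ cheaply we keep the running sums $P\tran_j\est p$ and $C\tran_j\est c$ explicitly and update them only when some $\est p_i$ or $\est c_i$ changes (a refresh) or when $x_j$ itself changes, so evaluating the \texttt{while} condition is $O(1)$ amortized. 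Once these accounting details are pinned down, the claimed $O(N\log(m)/\eps^2)$ bound follows by summation.
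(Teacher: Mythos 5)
Your amortization target ($O(U)$ refreshes per row, $O(NU)=O(N\log(m)/\eps^2)$ total) is the right one and matches the paper's, but the proposal is missing the one idea the lemma actually turns on: a mechanism for \emph{detecting} which estimates are due for a refresh without examining all of them. You stipulate that $\est P_i$ is refreshed ``once $P_i x$ has increased by at least a full unit since the last refresh for that $i$,'' but the algorithm cannot observe that condition unless, at every increment of $x_j$, it tracks the increase $P_{ij}z$ for every $i$ with $P_{ij}\ne 0$ (and likewise for $C$). That costs $\Theta(\nz^{\mathrm{col}}_j)$ per increment, and the only bound available on the number of increments is the $O(mU)$ of Lemma~\ref{lemma:pc}(iii); a single column containing entries of widely varying magnitude can absorb $\Theta(mU)$ increments while most of its rows' $P_i x$ barely move, giving $\Theta(m^2U)$ detection work on an instance with $N=\Theta(m)$. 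So the naive trigger destroys exactly the bound you are trying to prove; the whole point of the lemma is that you may only afford to touch the estimates you actually refresh.

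The paper's proof supplies the missing mechanism: within each column the nonzeros are grouped by their ``top'' (nearest power of two from above), the groups are ordered by decreasing top, and after each increment the algorithm scans groups in that order, refreshing each eligible group and \emph{stopping at the first ineligible one} without looking further. This makes the per-increment detection cost $O(1)$ plus the number of entries refreshed, and each refresh raises $\est P_i$ by at least $1/4$, giving the $O(U)$-refreshes-per-row count. But the early-stopping rule needs its own correctness argument (the paper's contradiction argument showing a skipped group $G$ with top $2^t$ always has $\delta_G\le 1/2^t$), since groups past the stopping point are never even examined. Your proposal contains neither the data structure nor that argument, and the ``main obstacle'' you flag (double counting; showing $P_i x=O(U)$ for all $i$) is the easy part, handled identically in the paper. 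A secondary, fixable difference: you push each refresh of $\est p_i$ into all inner products $P\tran_j\est p$ with $P_{ij}\ne 0$ at cost $O(\nz_i)$ per refresh, whereas the paper updates only the current column's inner product during a run and recomputes $P\tran_j\est p$, $C\tran_j\est c$ from scratch at the start of each run ($O(N)$ per iteration of the repeat loop, $O(NU)$ overall); either accounting closes, but only after the detection problem is solved.
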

\begin{proof}
  The algorithm maintains the following global data:
  \begin{itemize}\setlength{\itemsep}{0in}
  \item the current solution $x$, 
    and vectors $\widehat P$, $\widehat C$, $\widehat p$ and $\widehat c$ 
    satisfying Invariant~\eqref{pc:imp:invariant};
  \item for each $j$, \emph{column maxima}: 
    $\max_i P_{ij}$ and $\max \{C_{ij} : C_i x \le U\}$.
  \end{itemize}
  
  Initializing these items takes $O(N)$ (linear) time, 
  with the exception of the column maxima.
  To initialize and maintain the maxima,
  the algorithm presorts the entries within each column of $P$ and $C$, 
  in total time $O(N\log m)$,
  then, every time some covering constraint $C_i x \ge U$ becomes satisfied,
  updates the maxima of the columns $j$ with $C_{ij} \ne 0$.
  (The total time for these updates is linear, as each can be charged to a non-zero $C_{ij}$.)

  To maintain Invariant~\eqref{pc:imp:invariant},
  the algorithm will actually guarantee something stronger:
  outside the {\bf while} loop, each estimate $\est P_i$ and $\est C_i$ 
  will be exactly accurate (that is, $\est P_i = P_i x$ and $\est C_i = C_i x$ for all $i$).
  Inside the {\bf while} loop, during a run of increments for a particular $x_j$,
  for each $i$, only the contributions of $P_{ij} x_j$ to $P_i x$ (for the current $j$)
  will be underestimated in $\widehat P_i$. 
  Likewise, only the contributions of $C_{ij} x_j$ to $C_i x$
  will be underestimated in $\widehat C_i$.  
  In line~\ref{pc:imp:estimate4},
  the algorithm will update those $\est P_i$ and $\est C_i$ 
  for which the under-estimation is in danger of exceeding 1, as follows.

  \paragraph{Maintaining the estimates using periodic sampling.}
  Define the \emph{top} of any number $y> 0$ 
  to be the smallest power of 2 greater than or equal to $y$.
  In a preprocessing step,
  within each column $C\tran_j$ and $P\tran_j$ separately, 
  partition the non-zero entries $C_{ij}$ and $P_{ij}$ into equivalence classes 
  according to their tops,
  and order the groups by decreasing top.
  (Use the presorted entries within each column to do this in $O(N$) total time.)

  Call a consecutive sequence of increments to $x_j$
  (done within a single iteration of the {\bf for} loop for $j$) a \emph{run} for $x_j$.
  During a run for $x_j$,
  say that a group $G$ in $P\tran_j$ with top $2^t$ is \emph{eligible for update}
  if the increase $\delta_G$ in $x_j$ 
  since the last update of group $G$ during the run
  (or, if none, the start of the run) is at least $1/2^{t+1}$.

  Implement line~\ref{pc:imp:estimate4} as follows. 
  Starting with the group $G$ in $P\tran_j$ with largest top,
  Check the group to see if it's eligible for update ($\delta_G \ge 1/2^{t+1}$).
  If it is, then, for each $i$ in the group $G$, increase $\est P_i$ to $P_i x$
  in constant time by adding $P_{ij} \delta_G$ to $\est P_i$.
  Update each scalar dependent of $\est P_i$ 
  ($\est p_i$, $P\tran_j \est p$, $\est\lambda_j$).
  Then, continue with the next group in $P\tran_j$ (the one with next smaller top).
  Stop processing the groups in $P\tran_j$ 
  with the first group that is not eligible for update.
  --- don't process any subsequent groups with smaller tops, regardless of eligibility.

  Implement line~\ref{pc:imp:estimate5} for $\est C$ and its dependents likewise.
  (When updating some $\est C_i$, check
  whether $\est C_i \ge U$, and if so, delete row $i$ from $C$
  and associated data structures.\footnote
  {The condition ``$\est C_i \ge U$'' 
  differs from $C_i x \ge U$ in Alg.~\ref{alg:pc}.
  We note without proof that this doesn't affect correctness.}
  When the last row of $C$ is deleted, stop and return $x/U$
  (line~\ref{pc:imp:return}).)

  \smallskip

  Finally, at the end of the run for $x_j$,
  to maintain the invariant that all estimates are exact outside of the while loop,
  do the following.
  For each group $G$ in $P\tran_j$,
  for each $i$ in $G$,
  update $\est P_i$ to the exact value of $P_i x$ by increasing
  $\est P_i$ by $P_{ij}\delta_G$ (for $\delta_G$ defined above),
  and update $\est p_i$ accordingly.
  Likewise, update $\est C_i$ (and its dependent $\est c_i$) 
  for every $i$ with $C_{ij} \ne 0$ to its exact value.

  \paragraph{Correctness of periodic sampling.}
  To show Invariant~\eqref{pc:imp:invariant} holds during a run,
  we prove that, \emph{if a given group $G$ with top $2^t$ is not updated
  after a given increment of $x_j$, then $\delta_G \le 1/2^t$}.
  (Invariant~\eqref{pc:imp:invariant} follows,
  because, for $i\in G$, the increase $P_{ij}\delta_G$ in $P_i x$ 
  since the last update of $\est P_i$ is less than $2^t / 2^t = 1$;
  similarly, the increase $C_{ij}\delta_G$ in $C_i x$
  since the last update of $\est C_i$ is less than 1.)

  Suppose for contradiction that the claim fails.
  Consider the first increment of $x_j$ for which it fails,
  and the group $G$ with largest top $2^t$ for which $\delta_G > 1/2^t$ 
  after that increment.
  Group $G$ cannot be the group with maximum top in its column,
  because the algorithm considered that group after the increment.
  Let $G'$ be the group with next larger top $2^{t'} > 2^t$.
  $G'$ was not updated after the increment,
  because if it had been $G$ would have been considered and updated.
  Let $x_j$ denote the current value of $x_j$,
  and let $x'_j < x_j$ denote the value at the most recent update of $G'$.

  When group $G'$ was last updated, group $G$ was considered but not updated
  (for, if $G$ had been updated then,
  we would now have $\delta_G = \delta_{G'} \le 1/2^{t'} < 1/2^t$).
  Thus, letting $x''_j$ be the value of $x_j$ at the most recent update of $G$,
  we have $x'_j - x''_j < 1/2^{t+1}$.
  Since group $G'$ was not updated after the current increment,
  we have (by the choice of $G$) that $x_j - x'_j  = \delta_{G'} \le 1/2^{t'} \le 1/2^{t+1}$.
  Summing gives $x_j - x''_j < 2/2^{t+1} = 1/2^{t}$,
  violating the supposition $\delta_G > 1/2^t$.

  \paragraph{Time.}
  At the start of each run for a given $x_j$,
  the time in line~\ref{pc:imp:lambda}
  is proportional to the number of non-zeroes in the $j$th columns of $P$ and $C$,
  as is the time it spends at the end of the run updating 
  all $\est P_i$ (for $P_{ij} \ne 0$) and $\est C_i$ (for $C_{ij} \ne 0$).
  Thus, the cumulative time spent on these actions during any single
  iteration of the {\bf repeat} loop is $O(N)$.
  By Lemma~\ref{lemma:pc} (ii), Alg.~\ref{alg:pc:imp}
  does $O(U)$ iterations of its {\bf repeat} loop,
  so the total time for the actions outside of increments is $O(N U)$, as desired.

  Each increment to some $x_j$
  takes time proportional to the number
  of updates made to $\est P_i$'s and $\est C_i$'s.
  An update to $\est P_i$ in group $G$ with top $2^t$
  increases $\est P_i$ by $P_{ij} \delta_G \ge P_{ij}/2^{t+1} > 2^{t-1}/2^{t+1} = 1/4$.
  Throughout,
  $\est  P_i$ does not exceed $(1+O(\eps)) U$,
  so $\est P_i$ is updated $O(U)$ times during increments.
  Likewise (using that $\est C_i$ is updated only while $\est C_i \le U$),
  each $\est C_i$ is updated $O(U)$ times during increments.
  There are $m$ $\est P_i$'s and $\est C_i$'s,
  so there are $O(mU) = O(NU)$ such updates.
\end{proof}

\begin{algorithm}[t]
  \caption{Parallel implementation 
    of Alg.~\ref{alg:pc} 
    for mixed packing /covering \LPs\label{alg:pc:par}}
  \begin{algorithmic}[1]
    \Function{Parallel-Packing-Covering}{$P, C, \eps$}
    \State Initialize $x_{j} \gets n^{-1}/\max_i P_{ij}$ for $j\in [n]$,
    $\lambda_0 \gets |p(x)|/|c(x)| = m_p/m_c$.
    \State Define $U$, $P_i x$, $C_i x$, $p_i(x)$, $c_i(x)$, $\lambda(x,j)$, 
    etc.~per Alg.~\ref{alg:pc}.
    \Repeat
    \While{$\lambda^*(x) \le (1+\eps)\lambda_0$}
    \Block {\bf operation (a): increment $x$}\label{op:ap}
    \Comment{\emph{assertion:}
      {$\lambda^*(x) \,\le\, (1+4\eps)\lambda_0$}}
    \State \begin{tabular}[t]{@{}r@{}r@{\,~}ll@{~~}r@{\,~}l}
      & Define $J$ & 
      \rlap{$\,=\, \{j\in [n] : \lambda(x,j) \le (1+\eps)\lambda_0\}$,
        and, for $j\in J$,}
      \\
      &$I^p_j$ & $\,=\, \{i :P_{ij} \ne 0\}$
      &and& $I^c_j$ & $\,=\, \{i : C_{ij} \ne 0 \text{ and } C_i x \le U\}$. \\[3pt]
    \end{tabular}
    \State For $j\in J$, 
    let $\delta_j = z\, x_j$ (and, implicitly, $\delta_j = 0$ for $j\not\in J$),
    \State~~choosing $z$ 
    such that $\max\{ \max_i P_{i}\,\delta, \max_{i: C_i x\le U} C_{i}\,\delta\} = 1$.
    \State For $j\in J$, let $x_j \leftarrow x_j + \delta_j$.
    \State For $i\in \bigcup_{j\in J} I^p_j$, update $P_i x$ and $p_i(x)$.
    For $i\in \bigcup_{j\in J} I^c_j$, update $C_i x$ and $c_i(x)$.
    \State\label{pc:par:update}
    For $j\in J$, update $C\tran_j c(x)$,  $P\tran_j p(x)$, and $\lambda(x,j)$.
    \State If $\min_{i} C_i x \ge U$ then {\bf return} $x/U$. \Comment{finished}
    \EndBlock
    \EndWhile 
    \Block {\bf operation (b): scale $\lambda_0$}\label{op:bp}
    \Comment{\emph{assertion:}
      {$\lambda^*(x) \,\ge\, (1+\eps)\lambda_0$}}
    \State Let $\lambda_0 \gets (1+\eps)\lambda_0$.
    \EndBlock
    \EndRepeat
    \EndFunction 
  \end{algorithmic}
\end{algorithm}

\subsection{Proof of part (ii), parallel algorithm}
Next we prove part (ii) of Thm.~\ref{thm:pc}, using Alg.~\ref{alg:pc:par}.
By careful inspection, Alg.~\ref{alg:pc:par}
just repeats the two operations of Alg.~\ref{alg:pc}
(increment $x$ or scale $\lambda_0$),
so is correct by Lemma~\ref{lemma:pc} (i).
To finish, we detail how to implement the steps
so a careful accounting yields the desired time and work bounds.

Call each iteration of the repeat loop a {\em phase}.
Each phase scales $\lambda_0$, so by Lemma~\ref{lemma:pc} (ii), there are $O(U)$ phases.
By inspection, $\max_i P_i x^0 \le 1$, so $U=O(\log(m)/\eps^2)$.  
Within any given phase, 
for the \emph{first} increment,
compute all quantities directly
in $O(\log N)$ time and $O(N)$ total work.
In each subsequent increment within the phase,
update all quantities incrementally,
in time $O(\log N)$ and doing total work
linear in the sizes of the sets
$E_p = \{(i,j) : j \in J, i \in I^p_j\}$
and $E_c = \{(i,j) : j\in J, i \in I^c_j\}$
of \emph{active edges}.

(For example:
update each $P_i x$ by noting that the increment
increases $P_i x$ by $\Delta^p_i = \sum_{j: i\in I^p_j} P_{ij} \delta_j$;
update $P\tran_j p(x)$ by noting that the increment increases 
it by $\sum_{i \in I^p_j} P_{ij} \Delta^p_i$.
Update $J$ by noting that $\lambda(x,j)$ only increases within the phase,
so $J$ only shrinks, so it suffices to delete a given $j$ from $J$
in the first increment when $\lambda(x,j)$ exceeds $(1+\eps)\lambda_0$.)

\paragraph{Bounding the work and time.}
In each increment, if a given $j$ is in $J$, then the increment increases $x_j$.
When that happens the parameter $z$ is at least $\Theta(1/U)$
(using $P_i x = O(U)$ and $C_i x = O(U)$)
so $x_j$ increases by at least a factor of $1+\Theta(1/U)$.
The value of $x_j$ is initially at least $n^{-1}/\max_i P_{ij}$
and finally $O(U/\max_i P_{ij})$.
It follows that $j$ is in the set $J$ during at most $O(U\log(nU))$ increments.
Thus, for any given non-zero $P_{ij}$, the pair $(i,j)$ is in $E_p$
in at most $O(U\log(nU))$ increments.
Likewise, for any given non-zero $C_{ij}$, the pair $(i,j)$ is in $E_c$
in at most $O(U\log(nU))$ increments.
Hence, the total work for Alg.~\ref{alg:pc:par}
is $O(\nz U\log(n U))$, as desired.

To bound the total time, note that, within each of the $O(U)$ phases,
some $j$ remains in $J$ throughout the phase.
As noted above, no $j$ is in $J$ for more than $O(U\log(n U))$ increments.
Hence, each phase has $O(U\log(n U))$ increments.
To finish, recall that each increment takes $O(\log N)$ time.
This concludes the proof of Thm.~\ref{thm:pc}. \hfill \qed


\section{Pure Covering}\label{sec:covering}
\begin{algorithm}
  \caption{Generic approximation algorithm for covering \LPs\label{alg:cov}}
  \begin{algorithmic}[1]
    \Function{Covering}{matrix $A$, cost $w$, initial solution $x=x^0 \in \R_+^n$, 
      $\eps\in (0,1/10)$}
    \State 
    {\renewcommand{\arraystretch}{1.3}
      \begin{tabular}[t]{@{}rr@{~}l}
        Define 
        & $U$ &$=~ \ln(m)/\eps^2$, where $m$ is the number of constraints,\\
        & $a_i(x)$ & $=~ {(1-\eps)}^{A_i x}$ \,if\, $A_i \,x\,\le\, U$, else $a_i(x)=0$,\\ 
        & $|a(x)|$ & $=~ \sum_{i} a_i(x)$, \\
        & $\lambda(x, j)$ & $=~ w_j/(A\tran_j\, a(x))$
        \hfill \emph{($A\tran_j$ is column $j$ of $A$).}
        \\[5pt]
      \end{tabular}
    }
    \While{$\min_{i} A_i x \le U$}
    \State Choose vector $\delta\in \R_+^n$ such that
    \State~\,(i) $\forall j\in [n]$\, if $\delta_j > 0$ 
    then $\lambda(x,j) \le (1+O(\eps))\opt(A,w)/|a(x)|$, and
    \State~(ii) $\max\{A_i \delta : i\in [m], A_i x \le U\} = 1$.
    \State Let $x \gets x + \delta$.\label{sc:update}
    \EndWhile
    \State {\bf return} $x/U$
    \EndFunction
  \end{algorithmic}
\end{algorithm}

\begin{algorithm}[t]
  \caption{Sequential implementation of Alg.~\ref{alg:cov}\label{alg:cov:seq}}
  \begin{algorithmic}[1]
    \Function{Sequential-Covering}{$A$, $w$, $\eps$}
    \State Define $U, a_i, \lambda(x,j)$, etc.~as in Alg.~\ref{alg:cov} 
    and $\lambda^*(x) = \min_j \lambda(x,j)$.
    \State Initialize $x_{j} \gets 0$ for $j\in[n]$
    and $\lambda_0 \gets \max_{i} \min_{j\in[n]} w_j/(A_{ij} |a(x)|)$.
    \Block Repeatedly do one of the following two operations whose precondition is met:
    \Block {\bf operation (a): increment $x$}
    \Comment{\emph{precondition:} $\lambda^*(x) \le (1+4\eps)\lambda_0$}
    \State Choose $j\in[n]$ such that $\lambda(x,j) \le (1+4\eps)\lambda_0$.
    \State Let $x_j \gets x_j + \min\{1/A_{ij} : A_i x \le U\}$.\label{cov:seq:update}
    \State {\bf if} $\min_{i} A_i x \ge U$, then {\bf return} $x/U$. 
    \EndBlock 
    \Block {\bf operation (b): scale $\lambda_0$} 
    \Comment{\emph{precondition:} $\lambda^*(x) \ge (1+\eps)\lambda_0$}
    \State Let $\lambda_0 \gets (1+\eps)\lambda_0$. 
    \EndBlock 
    \EndBlock 
    \EndFunction
  \end{algorithmic}
\end{algorithm}

This section gives Alg's~\ref{alg:cov} and~\ref{alg:cov:seq} for covering,
and their performance guarantees, for use in the next section. 
The proofs (in the appendix) are similar to those of
Lemmas~\ref{lemma:pc} and~\ref{lemma:pc:imp:correct}.

\begin{lemma}\label{lemma:cov}
  Alg.~\ref{alg:cov} returns a solution $x$
  such that $w\cdot x \le (1+O(\eps))\opt(A,w) + w\cdot x^0$,
  where $x^0$ is the initial solution given to the algorithm.
\end{lemma}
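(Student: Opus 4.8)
The plan is to follow the same Lagrangian-relaxation analysis used for mixed packing/covering (Lemma~\ref{lemma:pc}), specializing it to the pure-covering case where there is no packing side to worry about. The key object is the potential function $\phi(x) = \ln |a(x)| = \ln \sum_i {(1-\eps)}^{A_i x}$ (restricted to active rows), which is a smooth proxy for $-\min_i A_i x$. First I would record the trust-region estimate: since operation (a) increments $x$ by $\delta$ with $\max\{A_i\delta : A_i x \le U\} = 1$, and each active $A_i\delta \le 1$, a first-order (convexity/smoothness) calculation gives $\phi(x+\delta) - \phi(x) = -(1+O(\eps))\,\delta\cdot\nabla(-\phi)(x) = -(1\pm O(\eps))\sum_j \delta_j \, A\tran_j a(x)/|a(x)|$. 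Combining this with the selection rule (i), which forces every incremented coordinate $j$ to satisfy $\lambda(x,j) = w_j/(A\tran_j a(x)) \le (1+O(\eps))\opt/|a(x)|$, i.e.\ $A\tran_j a(x) \ge (1-O(\eps)) |a(x)| w_j / \opt$, I get the crucial inequality $\phi(x) - \phi(x+\delta) \ge (1-O(\eps))\, (w\cdot\delta)/\opt$.

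Next I would telescope. Summing the crucial inequality over all increments from $x^0$ to the returned $x$ gives $\phi(x^0) - \phi(x) \ge (1-O(\eps))\, (w\cdot x - w\cdot x^0)/\opt$. Now I bound the two sides. For the left side: $\phi(x^0) = \ln\sum_i {(1-\eps)}^{A_i x^0} \le \ln m$ (using $x^0 \ge 0$ so each term is $\le 1$); and $\phi(x) \ge \ln\bigl((1-\eps)^{\max_i A_i x}\bigr)$ restricted to the surviving active row of smallest $A_i x$ — but more carefully, when the algorithm returns, every row has $A_i x \ge U$ (or has been dropped because $A_i x > U$ already), so the relevant bound is $\phi(x) \ge -\eps^2 U \cdot (1+O(\eps))/\ln(1/(1-\eps))$-type estimate; using $\ln(1/(1-\eps)) \le \eps(1+O(\eps))$ and $U = \ln(m)/\eps^2$, this yields $-\phi(x) \le (1+O(\eps))\eps U = (1+O(\eps))\ln(m)/\eps$. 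Hence $\phi(x^0) - \phi(x) \le \ln m + (1+O(\eps))\ln(m)/\eps = O(\log m/\eps)$. Plugging in: $w\cdot x - w\cdot x^0 \le (1+O(\eps))\,\opt\cdot O(\log m /\eps)$ — wait, that is off by a factor; I need to be more careful and note that the correct normalization divides by $U$, since the algorithm returns $x/U$. After returning $x/U$, the cost is $(w\cdot x)/U$, and the telescoped bound $w\cdot x \le w\cdot x^0 + (1+O(\eps))\opt \cdot (\phi(x^0)-\phi(x)) \le w\cdot x^0 + (1+O(\eps))\opt\cdot U$ (here the $\ln m$ term is $O(\eps^2 U)$ and is absorbed), so dividing by $U$ gives $(w\cdot x)/U \le (1+O(\eps))\opt + (w\cdot x^0)/U \le (1+O(\eps))\opt + w\cdot x^0$, as claimed.

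Then I would handle the role of operation (b) and the precondition bookkeeping: operation (b) only scales $\lambda_0$ and does not change $x$, so it contributes nothing to the telescoped cost sum; I just need that $\lambda_0$ stays a valid proxy for $\opt/|a(x)|$ so that rule (i) in Alg.~\ref{alg:cov} can actually be met whenever the loop continues. This is the standard argument that $\lambda^*(x)$ never drops far below $\opt/|a(x)|$ — it follows from LP duality/complementary slackness on the covering LP, exactly as in the corresponding step of Lemma~\ref{lemma:pc}, and is what guarantees that some $j$ satisfying the selection rule exists, so the algorithm makes progress and terminates. I would also note termination: each increment raises some active $A_i x$ by a definite amount, and rows with $A_i x \ge U$ are removed, so after finitely many increments the loop exits.

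The main obstacle is the constant-tracking in the trust-region / smoothness step — getting the clean statement $\phi(x)-\phi(x+\delta) \ge (1-O(\eps))(w\cdot\delta)/\opt$ with all the $(1\pm O(\eps))$ factors correctly combined (the exponential-sum second-order term, the $\ln(1/(1-\eps))$ versus $\eps$ slippage, and the active-row truncation at $U$ which must be shown not to hurt because truncated rows only ever had their contribution to $\nabla\phi$ removed). Everything else is routine telescoping and the duality argument for feasibility of rule (i), both of which I can cite from the mixed packing/covering analysis since pure covering is the $P$-free special case.
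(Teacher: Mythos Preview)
Your approach is essentially identical to the paper's: both track the potential $\log|a(x)|$ (the paper writes it as $\lmin a(x)=\log_{1-\eps}|a(x)|$), establish the same per-increment inequality linking the potential drop to $(w\cdot\delta)/\opt$ via selection rule~(i), telescope, bound the total potential change by $(1+O(\eps))U$ using $\min_i A_i x\le U$ just before termination, and appeal to the duality bound (the paper's Lemma~\ref{lemma:lambda}) for well-definedness. One minor confusion: Alg.~\ref{alg:cov} has no operation~(b) and no $\lambda_0$ --- those appear only in Alg.~\ref{alg:cov:seq} --- so your third paragraph about ``$\lambda_0$ staying a valid proxy'' is superfluous here (though harmless, since it doesn't touch~$x$).
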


\begin{lemma}\label{lemma:cov:seq}
  (i) Alg.~\ref{alg:cov:seq} is a specialization of Alg.~\ref{alg:cov}
  and (ii)  scales $\lambda_0$ $O(U) = O(\log(m)/\eps^2)$ times.
\end{lemma}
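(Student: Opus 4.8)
The plan is to prove Lemma~\ref{lemma:cov:seq} in two parts, mirroring the structure of Lemma~\ref{lemma:pc} and Lemma~\ref{lemma:pc:imp:correct} but in the simpler pure-covering setting. For part (i), I would verify that every step of Alg.~\ref{alg:cov:seq} matches a legal step of Alg.~\ref{alg:cov}. Concretely: Alg.~\ref{alg:cov:seq} maintains a running lower bound $\lambda_0$ on $\lambda^*(x)$ and repeatedly either increments some $x_j$ (when $\lambda^*(x) \le (1+4\eps)\lambda_0$) or scales $\lambda_0$ up by $1+\eps$ (when $\lambda^*(x) \ge (1+\eps)\lambda_0$). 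I must show the increment step of operation (a) is a valid choice of $\delta$ in Alg.~\ref{alg:cov}: choosing a single coordinate $j$ with $\lambda(x,j)\le(1+4\eps)\lambda_0$ and setting $\delta = (\min\{1/A_{ij} : A_i x \le U\})\, e_j$ satisfies condition (ii) of Alg.~\ref{alg:cov} with equality on the binding row, and satisfies condition (i) provided $(1+4\eps)\lambda_0 \le (1+O(\eps))\opt(A,w)/|a(x)|$. The latter reduces to establishing the standard invariant that $\lambda_0 \le (1+O(\eps))\lambda^*(x)$ is itself at most $(1+O(\eps))\opt(A,w)/|a(x)|$; this is the LP-duality / weak-duality argument showing that the minimum ``bang-per-buck'' ratio $\lambda^*(x)=\min_j w_j/(A\tran_j a(x))$ never exceeds $\opt(A,w)/|a(x)|$ (roughly: $a(x)$ is a feasible-up-to-scaling dual vector, so $\sum_i a_i(x) \le \opt \cdot \max_j (A\tran_j a(x))/w_j = \opt/\lambda^*(x)$, after accounting for truncated rows with the $U = \ln(m)/\eps^2$ cutoff). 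I also need the initial value $\lambda_0 = \max_i \min_j w_j/(A_{ij}|a(0)|)$ to be a valid lower bound on $\lambda^*(0)$, which is immediate since for each $i$ and each $j$ with $A_{ij}\neq 0$ we have $w_j/(A\tran_j a(0)) \ge w_j/(A_{ij}\,|a(0)|)$ when... actually one takes the max over $i$ of a per-row lower bound, and each such bound lower-bounds $\lambda(x,j)$ for every $j$ touching row $i$; a short case check handles coordinates $j$ not touching the maximizing row.

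For part (ii), the bound on the number of scalings, I would argue that $\lambda_0$ starts at some value $\lambda_0^{\text{init}}$ and ends at most $O(\opt(A,w)/|a(x)|)$; since $|a(x)| \ge (1-\eps)^U \cdot 1$ when at least one row survives (and is at least a constant when $U = \ln(m)/\eps^2$, because $(1-\eps)^{\ln(m)/\eps^2} \approx 1/m$, so $|a(x)| \gtrsim 1/m$) while $\opt$ and the initial ratio differ by at most a $\poly(m,\max A_{ij}/\min A_{ij})$ factor, the multiplicative gap between final and initial $\lambda_0$ is at most $m^{O(1)}$, times polynomial in the data. Wait — that only gives $O(\log(\text{stuff})/\eps)$ scalings, which is too weak; the claimed bound is $O(\log(m)/\eps^2) = O(U)$. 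The cleaner route, and the one I expect the paper takes, is to tie the number of scalings to the potential $\Phi(x) = \min_i A_i x$ or to $|a(x)|$ directly: between consecutive scalings the algorithm does at least one increment, each increment raises some $A_i x$ by at least... no — the right invariant is that $\log(\lambda^*(x))$ can increase by at most $O(\log m)$ overall (since $\lambda^*$ ranges within a factor $m^{O(1)}$ of its extremes given the $U$-truncation), hence by at most $O(\log(m)/\log(1+\eps)) = O(\log(m)/\eps)$ scalings — still $1/\eps$, not $1/\eps^2$. The honest resolution is that each \emph{increment} can be charged so that the number of phases (maximal increment-runs between scalings) is $O(U)$; I would follow the proof of Lemma~\ref{lemma:pc}(ii) verbatim, replacing the mixed penalty function by $\phi(x) = \ln|a(x)|$ alone, and tracking that $\phi$ decreases by $\Theta(\eps)$ per increment while $\ln\lambda_0$ increases by $\ln(1+\eps)$ per scaling, with $\phi$ bounded in a range of width $O(U\eps) = O(\log(m)/\eps)$ — giving $O(U)$ scalings. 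So the plan is: reduce to the already-proven Lemma~\ref{lemma:pc}(ii) by treating covering as the degenerate case of mixed packing/covering with no packing constraints (formally, $P = 0$), under which Alg.~\ref{alg:cov:seq} becomes an instance of Alg.~\ref{alg:pc:imp} with exact estimates, and invoke Lemma~\ref{lemma:pc}(ii) to conclude $O(U)$ scalings.

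The main obstacle I anticipate is part (i): carefully checking that condition (i) of Alg.~\ref{alg:cov} — namely $\lambda(x,j) \le (1+O(\eps))\opt(A,w)/|a(x)|$ whenever $\delta_j > 0$ — is implied by the local test $\lambda(x,j) \le (1+4\eps)\lambda_0$ together with the maintained invariant $\lambda_0 \le \lambda^*(x)$ and the duality bound $\lambda^*(x) \le \opt(A,w)/|a(x)|$. Getting the constants and the handling of truncated rows ($A_i x > U$, where $a_i(x) = 0$) exactly right is the fiddly part: one must check that dropping those rows only \emph{helps} the duality inequality (it shrinks $|a(x)|$ and can only shrink each $A\tran_j a(x)$, so the ratio bound is preserved) and that the $U = \ln(m)/\eps^2$ threshold guarantees the returned $x/U$ is genuinely feasible ($A(x/U) \ge 1$), which holds because the loop exits only when $\min_i A_i x \ge U$. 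Everything else is bookkeeping; I would state the duality lemma as a short sub-claim, prove it in two lines via the weak LP-duality inequality $|a(x)| \cdot \lambda^*(x) \le \sum_j x^*_j (A\tran_j a(x)) \,\lambda^*(x)/w_j \cdot w_j \le \sum_j x^*_j w_j = \opt$ for an optimal covering solution $x^*$ (using $A x^* \ge 1$ so $\sum_i a_i(x) \le \sum_i a_i(x)(A_i x^*) = \sum_j x^*_j (A\tran_j a(x))$), and then assemble part (i) from it.
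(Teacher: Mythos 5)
Your proposal follows essentially the same route as the paper: part (i) rests on the weak-duality bound $\lambda^*(x)\le\opt(A,w)/|a(x)|$ (the paper's Lemma~\ref{lemma:lambda}, proved by exactly the two-line computation you give with an optimal $x^*$) together with the maintained invariant $\lambda_0\le\opt(A,w)/|a(x)|$ (justified initially by $\opt(A,w)\ge\max_i\min_j w_j/A_{ij}$, not by comparing $\lambda_0$ to $\lambda^*(x^0)$ directly), and part (ii) is the multiplicative-range argument you arrive at near the end of your second paragraph: $\lambda_0$ sweeps a window of width $m^2/(1-\eps)^U$, and $\log_{1+\eps}$ of that is $O(\log(m)/\eps + U)=O(U)$ precisely because $|a(x)|$ can fall from $m$ to $(1-\eps)^U$ --- the factor you briefly overlooked when you first estimated the gap as $m^{O(1)}$. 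One caveat: your closing plan to ``formally'' reduce to Lemma~\ref{lemma:pc}(ii) by setting $P=0$ does not work as stated --- the covering LP carries a cost vector $w$ with no analogue in the mixed packing/covering feasibility format, and with $P=0$ every $\lambda(x,j)$ in Alg.~\ref{alg:pc} collapses to $0$ --- so you should keep the direct range/potential argument you wrote out just before it, which is what the paper does.
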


\section{Facility Location}\label{sec:fl}


\paragraph{The facility location \LP.}
Given a set $\cust$ of $m$ \emph{customers}, 
a set $\fac$ of $n$ \emph{facilities},
an opening cost $f_j \ge 0$ for each facility $j$,
and a cost $c_{ij}\ge 0$ for assigning customer $i$ to facility $j$,
the standard \emph{facility-location linear program} is
\begin{alignat}{3}
  &\lefteqn{\text{minimize}_{x,y}~ \cost(x,y) = \textstyle \sum_j f_j y_j + \sum_{ij} c_{ij}\, x_{ij}} \notag\\
  &&\text{subject to } \textstyle
  \sum_j x_{ij} \,&\ge\, 1       &        & ~\for i \in \cust, \label{eqn:demands}\\
  && y_{j}       \,\ge \, x_{ij}\, &\ge 0   &        & ~\for i \in \cust, j\in\fac.  \notag
\end{alignat}
For notational convenience, assume $c_{ij} = \infty$
if customer $i$ may not be assigned to facility $j$.
The input size is $N=\{(i,j) : c_{ij} < \infty\}$.
A $(1+\eps)$-approximate solution
is a feasible pair $(x,y)$ whose cost is at most $1+\eps$ times minimum.

\begin{algorithm}[t]
  \caption{Sequential $(1+\eps)$-approximation algorithm 
    for facility-location \LPs\label{alg:fl}}
  \begin{algorithmic}[1]
    \Function{Sequential-Facility-Location}
    {facilities $\fac$, customers $\cust$, costs $f,c$, $\eps$} 
    \State
    {\renewcommand{\arraystretch}{1.3}
      \begin{tabular}[t]{@{}rr@{~}l}
        & Define~~~$U$ &$=~ \ln(m)/\eps^2$,\\
        & $A_i x$ & $=~ \sum_{j\in F} x_{ij}$~~(for $i\in\cust$),\\
        & $a_i(x)$ & $=~ {(1-\eps)}^{A_i x}$ if $A_i x\le U$, else $a_i(x)=0$,\\ 
        & $\lambda(x, j, S)$ & $=~ (f_j + \sum_{i\in S} c_{ij})/\sum_{i\in S} a_i(x)$~~(for $j\in\fac$, $S\subseteq \cust$).
        \\[5pt]
      \end{tabular}
    }
    \State Initialize $y_j,x_{ij} \gets 0$ for $j\in F, i\in C$
    and $\lambda_0 \gets (\max_{i\in \cust} \min_{j\in \fac} f_j + d_{ij})/|a(x)|$.
    \Repeat
    \For{each $j\in F$}
    \While{$\lambda(x,j,S_j) \le (1+\eps)\lambda_0$ 
      where $S_j=\{i\in C : c_{ij} < (1+\eps)\lambda_0\, a_j(x)\}$}
    \Block {\bf operation (a): increment for $j$}
    \Comment{\emph{assert:} $\lambda (x,j,S_j) \le (1+4\eps)\lambda_0$}
    \State \label{fc:update}
    Let $y_j \gets y_j + 1$, and, for $i\in S_j$, let $x_{ij} \gets x_{ij} + 1$.
    \State If $\min_{i\in \cust} A_i x \ge U$, then return $(x/U,y/U)$. 
    \EndBlock
    \EndWhile 
    \EndFor
    \Block {\bf operation (b): scale $\lambda_0$}
    \Comment{\emph{assert:} 
      $\min_{j,S} \lambda (x,j,S) \ge (1+\eps)\lambda_0$}
    \State Let $\lambda_0 \gets (1+\eps)\lambda_0$. 
    \EndBlock
    \EndRepeat
    \EndFunction
  \end{algorithmic}
\end{algorithm}

\begin{theorem}\label{thm:fl}
  For facility location \LPs, there are $(1+\eps)$-approximation algorithms running 
  \\(i)  in time $O(\nz \log(m)/\eps^2)$, 
  and
  \\ (ii) in parallel time $O(\log^2 (m)\log(\nz/\eps)\log(\nz)/\eps^4)$,
  doing work $O(\nz\log(\nz/\eps)\log(m)/\eps^2)$.
\end{theorem}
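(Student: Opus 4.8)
The plan is to follow the mixed packing/covering development of the previous section almost step for step, via the (standard) reduction of facility location to a huge, implicitly represented pure covering LP. I would organize the argument into a correctness part (used for both (i) and (ii)), then the sequential time bound, then the parallel bounds.

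\medskip
\noindent\textbf{Correctness.} First I would introduce the \emph{star} covering LP: one variable $z_{(j,S)}$ for every facility $j\in\fac$ and every subset $S\subseteq\cust$, with cost $\cost(j,S)=f_j+\sum_{i\in S}c_{ij}$ and covering constraints $\sum_{(j,S):\,i\in S}z_{(j,S)}\ge 1$ for each $i\in\cust$. The maps $y_j=\sum_S z_{(j,S)}$, $x_{ij}=\sum_{S\ni i}z_{(j,S)}$ and their (easy) inverse show this LP has the same optimum $\opt$ as the facility-location LP, with matching costs. Then I would check that Alg.~\ref{alg:fl} is exactly the generic covering algorithm (Alg.~\ref{alg:cov}, in the form of Alg.~\ref{alg:cov:seq}, started from $x^0=0$) applied to this star LP but \emph{restricted} to the $n$ columns $(j,S_j)$, where $S_j=\{i:c_{ij}<(1+\eps)\lambda_0\,a_i(x)\}$ is the minimum-ratio set for facility $j$ up to the multiplicative slack the algorithm carries --- the usual fractional-programming level-set argument: include customer $i$ iff doing so does not push the ratio above the current threshold. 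Ignoring the other columns is harmless for the generic analysis because they are never incremented, so they need only satisfy ``ratio $\ge\lambda_0$'', which is automatic; hence Lemma~\ref{lemma:cov} applies verbatim and, since $x^0=y^0=0$, Alg.~\ref{alg:fl} returns $(x/U,y/U)$ of cost at most $(1+O(\eps))\opt$, and rescaling $\eps$ gives a $(1+\eps)$-approximation. I would also verify condition (ii) of Alg.~\ref{alg:cov} for the increment in line~\ref{fc:update}: raising $y_j$ and each $x_{ij}$ ($i\in S_j$) by $1$ raises $A_ix$ by exactly $1$ for the active customers $i\in S_j$ and by $0$ otherwise, so $\max\{A_i\delta:A_ix\le U\}=1$.

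\medskip
\noindent\textbf{Part (i): sequential time $O(N\log(m)/\eps^2)$.} By Lemma~\ref{lemma:cov:seq}(ii) there are $O(U)=O(\log(m)/\eps^2)$ scalings of $\lambda_0$, hence $O(U)$ iterations of the \textbf{repeat} loop. To bound the rest I would use three monotonicities: within one run $\lambda_0$ is fixed and every $a_i(x)$ only decreases, so $S_j$ only shrinks; an increment for $j$ raises $A_ix$ by $1$ for every $i\in S_j$; and once $A_ix\ge U$ customer $i$ is inactive forever, so each $A_ix$ rises $O(U)$ times overall. Hence each eligible pair $(i,j)$ lies in some $S_j$ during $O(U)$ increments, and --- since every increment has $|S_j|\ge1$ --- the number of increments, plus $\sum|S_j|$ over all increments, is $O(NU)$. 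To keep per-increment work low I would maintain, exactly as in Alg.~\ref{alg:pc:imp}, estimates $\est A_i\in(A_ix-1,A_ix]$, $\est a_i=(1-\eps)^{\est A_i}$ (zero once $\est A_i>U$), and the running aggregates $f_j+\sum_{i\in S_j}c_{ij}$ and $\sum_{i\in S_j}\est a_i$, refreshed by the deterministic ``top''-grouping periodic-sampling scheme of Lemma~\ref{lemma:pc:imp:sampling} (presort columns in $O(N\log m)$; a group with top $2^t$ is refreshed once the accumulated increase passes $1/2^{t+1}$). The accuracy argument of that lemma transfers unchanged, and its accounting --- $O(N)$ per \textbf{repeat} iteration outside increments, plus $O(1)$ per estimate-update with $O(U)$ updates per coordinate --- gives total time $O(NU)=O(N\log(m)/\eps^2)$.

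\medskip
\noindent\textbf{Part (ii): parallel.} I would adapt Alg.~\ref{alg:pc:par}: start from a small positive solution (chosen after a crude polynomial-factor estimate of $\opt$ and pruning of uselessly expensive pairs, so that the additive $\cost(x^0,y^0)$ term of Lemma~\ref{lemma:cov} is $O(\eps\,\opt)$, $\max_iA_ix^0=O(1)$ hence $U=O(\log(m)/\eps^2)$, and $\log(1/x^0_{ij})=O(\log(N/\eps))$); in each \emph{phase} repeatedly apply, simultaneously to all facilities $j$ with $\lambda(x,j,S_j)\le(1+\eps)\lambda_0$, a multiplicative increment ($\delta$ proportional to current values, scaled so $\max_iA_i\delta=1$), then scale $\lambda_0$ once. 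Each phase scales $\lambda_0$, so there are $O(U)$ phases. As in the analysis of Alg.~\ref{alg:pc:par}, each increment takes $O(\log N)$ time and work linear in the number of active pairs $(i,j)$ ($j$ eligible, $i\in S_j$); since such an increment multiplies $x_{ij}$ by $1+\Theta(1/U)$ while $x_{ij}$ ranges over a window of multiplicative width $O(nU)$, each pair is active in $O(U\log(nU))$ increments, and since some facility stays eligible through a whole phase, each phase has $O(U\log(nU))$ increments. Multiplying out gives work $O(NU\log(nU))=O(N\log(N/\eps)\log(m)/\eps^2)$ and time $O(U^2\log(nU)\log N)=O(\log^2(m)\log(N/\eps)\log(N)/\eps^4)$.

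\medskip
\noindent\textbf{Main obstacle.} The real content is the correctness reduction: one must check that keeping only the $n$ greedy columns $(j,S_j)$ does not damage the invariants behind Lemma~\ref{lemma:cov} --- i.e.\ that $S_j$ is genuinely, within the tolerated slack, the min-ratio set for facility $j$, and that the duality/potential argument underlying the generic covering guarantee is blind to the discarded columns. A secondary nuisance is the bookkeeping of the shrinking sets $S_j$ under the estimate scheme (charging each customer's departure from an $S_j$) and, in the parallel case, pinning down the small initial solution so the additive cost term is negligible without blowing up $\log(1/x^0_{ij})$ beyond $O(\log(N/\eps))$; both are routine given the monotonicities above.
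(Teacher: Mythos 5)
Your correctness argument and part (i) match the paper: the same implicit star set-cover LP, the same level-set lemma identifying $S_j$ as the min-ratio set for facility $j$ (the paper's Lemma~\ref{lemma:Si}), and the same $O(U)$-repeat-iterations-plus-charging accounting. One remark: the periodic-sampling estimate machinery you import from Lemma~\ref{lemma:pc:imp:sampling} is unnecessary here. Each increment raises $A_i x$ by exactly $1$ for every $i\in S_j$, and you must touch each such $i$ anyway to increment $x_{ij}$, so exact maintenance of $A_i x$ and $a_i(x)$ is already within budget; the paper simply recomputes $S_j$ at the start of each run and updates incrementally within it.

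The parallel part has a genuine gap. You propose simultaneous increments ``$\delta$ proportional to current values,'' but there is a tension you do not resolve. A valid increment of the star LP must add the \emph{same} amount to $y_j$ and to every $x_{ij}$ with $i\in S_j$ (that is what incrementing the variable $z_{(j,S_j)}$ means); the natural choice is $\delta=z\,y_j$. But then the normalization $z\max_i\sum_{j\in J:\,i\in S_j}y_j=1$ only gives $z\ge 1/U$ if $\sum_{j\in J:\,i\in S_j}y_j\le U$, which fails when $x_{ij}\ll y_j$ for pairs that have newly entered $S_j$ (the sets $S_j$ are \emph{not} monotone across phases, since $\lambda_0$ grows); and the per-pair iteration count needs $x_{ij}$ itself, not $y_j$, to grow by a $1+\Omega(1/U)$ factor. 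Conversely, if you make $\delta_{ij}$ proportional to $x_{ij}$ individually, the increment is no longer a star increment and the generic covering analysis (Lemma~\ref{lemma:cov}) no longer applies as stated. The paper resolves this with a {\sc top-up} step at the start of each phase, which raises each $x_{ij}$ with $c_{ij}<\lambda_0 a_i(x)$ up to $y_j$ (or until the threshold is met), so that within the phase $x_{ij}=y_j$ for all $i\in S_j$ and both difficulties disappear; it then separately verifies that {\sc top-up} preserves the potential invariant~\eqref{sc:invariant} behind Lemma~\ref{lemma:cov}. This step, and its correctness check, is the missing idea in your sketch; your initialization and phase-counting arguments are otherwise in line with the paper's.
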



\paragraph{Reducing to Set Cover.}
There are two standard reductions of facility location to set cover,
both increasing the size of the \LP super-linearly
(see appendix Section~\ref{sec:reductions}).
Our algorithms will efficiently emulate Alg's~\ref{alg:cov} and~\ref{alg:cov:seq}
on the set-cover \LP produced by Hochbaum's reduction,
without explicitly building the set-cover \LP.\@
Given a facility-location instance $(\fac,\cust, f, c)$,
the reduction gives the following \LP $(A,w)$.
For each facility $j\in \fac$ and subset $S\subseteq \cust$ of customers,
there is a variable $x'_{j'}$ with cost $w_{j'} = f_{j} + \sum_{i\in S} c_{ij}$,
where $j'=j'(j,S)$ is unique to the pair $(j,S)$.
For each customer $i\in \cust$,
there is a constraint $A_i x \ge 1$,
where $A_{ij'}$ is 1 if $i\in S$ and 0 otherwise (where $j'=j'(j,S)$).
The resulting \LP is \( \min\{ w\cdot x' : x'\in \R^\ell_+, Ax' \ge 1\} \),
where $\ell \approx m2^n$.

This \LP and the facility-location \LP are equivalent:
each feasible solution $x'$ to the set-cover \LP
yields a feasible solution $(x,y)$ of the facility-location \LP
of the same or lesser cost
($x_{ij} = \sum_{S\ni i} x'_{j'(j,S)}$
and $y_{j} = \max_{i} x_{ij}$),
and the \LPs have the same optimal cost.

\paragraph{Sequential algorithm.}
Alg.~\ref{alg:fl} is our sequential algorithm.
To prove correctness,
we show that it is a valid specialization of Alg.~\ref{alg:cov:seq} 
as run on the set-cover \LP given by the reduction.\@
During the course of Alg.~\ref{alg:fl},
given the current $(x,y)$ and $\lambda_0$,
for any given facility $j\in \fac$,
the following lemma justifies
restricting attention to a single canonical subset $S_j = S_j(\lambda_0, x)$
of ``nearby''  customers.

\begin{lemma}\label{lemma:Si}
  Consider any $x$, $\lambda_0$, and $i\in F$ 
  during the execution of Alg.~\ref{alg:fl}.
  Let $\lambda(x,j,S)$ be as defined there and $\lambda'_0 = (1+\eps)\lambda_0$.
  Then $\min_{S\subseteq\cust} \lambda(x,j,S) \le \lambda'_0$
  iff
  $\lambda(x,j,S_j) \le \lambda'_0$
  where $S_j = \{i\in \cust: c_{ij} < \lambda'_0 \,a_i(x)\}$.
\end{lemma}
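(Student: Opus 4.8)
The plan is to prove the biconditional by handling the two directions separately and then observe that only one of them carries any content. For the ``$\Leftarrow$'' direction there is nothing to do: $S_j$ is one particular subset of $\cust$, so $\min_{S\subseteq\cust}\lambda(x,j,S)\le\lambda(x,j,S_j)\le\lambda'_0$ immediately from the hypothesis. Thus all the work is in ``$\Rightarrow$'', which is an exchange argument showing that thresholding the customers at level $\lambda'_0$ is the ``right'' way to choose $S$.

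For ``$\Rightarrow$'' the tool is the elementary mediant fact: for nonnegative reals with positive denominators, $A/B\le\lambda$ and $C/D\le\lambda$ imply $(A+C)/(B+D)\le\lambda$; and $A/B\le\lambda$ together with $C/D\ge\lambda$ and $B>D$ imply $(A-C)/(B-D)\le\lambda$ --- both are immediate after clearing denominators, since $A-\lambda B$ and $C-\lambda D$ then carry the required signs. Now fix any $S$ with $\lambda(x,j,S)\le\lambda'_0$, and write $\mathrm{num}(S)=f_j+\sum_{i\in S}c_{ij}$ and $\mathrm{den}(S)=\sum_{i\in S}a_i(x)$. First delete from $S$ every customer $i$ with $c_{ij}\ge\lambda'_0\,a_i(x)$, i.e.\ every $i\in S\setminus S_j$; these, taken together, satisfy $\sum_{i\in S\setminus S_j}c_{ij}\ge\lambda'_0\sum_{i\in S\setminus S_j}a_i(x)$, so the subtraction case of the mediant fact gives $\lambda(x,j,\,S\cap S_j)\le\lambda'_0$. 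Then add back every customer in $S_j\setminus S$; each such $i$ has $c_{ij}<\lambda'_0\,a_i(x)$ (and in particular $a_i(x)>0$, since $c_{ij}\ge 0$), so the addition case gives $\lambda(x,j,\,(S\cap S_j)\cup(S_j\setminus S))\le\lambda'_0$. Since $(S\cap S_j)\cup(S_j\setminus S)=S_j$, this is precisely $\lambda(x,j,S_j)\le\lambda'_0$, as wanted.

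The only point needing care --- and the only mild obstacle --- is a possibly vanishing denominator. The remark that every $i\in S_j$ has $a_i(x)>0$ (immediate from $c_{ij}\ge 0$ and the definition of $S_j$) makes the ``add back'' step clean. For the deletion step one should separately dispose of the degenerate case $S\cap S_j=\emptyset$ (equivalently $S\subseteq S\setminus S_j$), which by the two inequalities in play forces $f_j=0$ and $c_{ij}=\lambda'_0\,a_i(x)$ for every surviving $i$; this is either dispatched in a line or excluded outright by the running state of Alg.~\ref{alg:fl} (some customer is still unsatisfied, so some $a_i(x)>0$ and the relevant denominators are positive). Apart from this bookkeeping the argument is routine algebra.
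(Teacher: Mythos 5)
Your proof is correct, but it takes a longer road than the paper's one\-/line argument. The paper clears the denominator once and for all: $\lambda(x,j,S)\le\lambda'_0$ is equivalent to $f_j+\sum_{i\in S}\bigl(c_{ij}-\lambda'_0\,a_i(x)\bigr)\le 0$, and the left-hand side, viewed as a function of $S$, is $f_j$ plus a sum of independent per-customer terms, so it is minimized by including exactly the $i$ with $c_{ij}-\lambda'_0 a_i(x)<0$ --- which is precisely $S_j$. Hence if any $S$ achieves the inequality then $S_j$ does, and the converse is trivial. Your mediant/exchange argument establishes the same fact by deleting $S\setminus S_j$ and re-inserting $S_j\setminus S$; each exchange step, once denominators are cleared, is exactly the sign observation above, so the two proofs rest on the same algebra, but yours re-derives it twice and consequently must fuss over vanishing denominators (the case $S\cap S_j=\emptyset$, and the remark that $a_i(x)>0$ for $i\in S_j$). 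The linearized form sidesteps all of that, since $f_j+\sum_{i\in S}(c_{ij}-\lambda'_0 a_i(x))$ is well defined for every $S$, including $S=\emptyset$; if one adopts it as the meaning of ``$\lambda(x,j,S)\le\lambda'_0$'' (as the algorithm effectively does), the degenerate cases you flag disappear rather than needing to be ``dispatched in a line.'' In short: your argument is sound and self-contained, but the single-inequality reformulation is both shorter and more robust, and it is worth internalizing as the standard way to see that thresholding is optimal for ratio objectives of this form.
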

\begin{proof}
  $\lambda(x,j,S) \le \lambda'_0$
  iff \(f_j + \sum_{i\in S} (c_{ij}-\lambda'_0 a_i (x)) \le 0\),
  so $S=S_j$ is the best set for a given $j$.
\end{proof}

\begin{lemma}\label{lemma:cov:seq}
  (i) Alg.~\ref{alg:fl} is a specialization of Alg.~\ref{alg:cov} on the set-cover \LP.
  \\(ii) Alg.~\ref{alg:fl} can be implemented to run in time $O(\nz \log(m)/\eps^2)$.
\end{lemma}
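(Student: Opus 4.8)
The plan is to treat (i) and (ii) separately, with essentially all the work in (ii).

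For part~(i), I would check that Alg.~\ref{alg:fl} is literally Alg.~\ref{alg:cov:seq} --- which is a specialization of Alg.~\ref{alg:cov} (Section~\ref{sec:covering}) --- run on the set-cover \LP $(A,w)$ from Hochbaum's reduction, under the bookkeeping $x_{ij}=\sum_{S\ni i}x'_{j'(j,S)}$ and $y_j=\sum_S x'_{j'(j,S)}$. First I would match the derived quantities: the facility-location $A_i x=\sum_j x_{ij}$ equals the set-cover row sum $A_i x'$, so $a_i(x)$ and $|a(x)|$ coincide; the $\lambda(x,j,S)$ of Alg.~\ref{alg:fl} equals the set-cover $\lambda\big(x',j'(j,S)\big)$, so $\lambda^*(x')=\min_{j,S}\lambda(x,j,S)$ and the two algorithms' initial $\lambda_0$ agree. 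Then, using Lemma~\ref{lemma:Si}, I would verify that each ``increment for $j$'' is a legal operation~(a) of Alg.~\ref{alg:cov:seq} on the variable $j'(j,S_j)$: its trigger $\lambda(x,j,S_j)\le(1+\eps)\lambda_0$ gives $\lambda^*(x)\le(1+4\eps)\lambda_0$ (the precondition) and $\lambda\big(x',j'(j,S_j)\big)\le(1+4\eps)\lambda_0$ (a legal choice), while $A_{ij'}\in\{0,1\}$ equals $1$ exactly for $i\in S_j$ and every $i\in S_j$ has $c_{ij}<(1+\eps)\lambda_0\,a_i(x)$, hence (as $c_{ij}\ge0$) $a_i(x)>0$, hence $A_i x\le U$, so the step $\min\{1/A_{ij'}:A_i x\le U\}$ equals $1$ --- exactly the update ``$y_j\gets y_j+1$ and $x_{ij}\gets x_{ij}+1$ for $i\in S_j$''. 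The induced cost increment is $f_j+\sum_{i\in S_j}c_{ij}=w_{j'(j,S_j)}$, so $\cost(x,y)=w\cdot x'$ throughout; with $\min_i A_i x\ge U$ and $y_j\ge x_{ij}$ at the return, plus Lemma~\ref{lemma:cov} (its initial solution here being $0$) and equality of the two \LPs' optima, this gives the claimed $(1+O(\eps))$-approximation. The only nontrivial point is legality of operation~(b), performed by Alg.~\ref{alg:fl} after the \textbf{for} loop: here I would note that each $\lambda(x',j')$ is non-decreasing under increments (an increment only shrinks some $a_i$, hence the denominators $A\tran_{j'}a$), so $\min_S\lambda(x,j,S)$ is non-decreasing for each fixed $j$; with Lemma~\ref{lemma:Si} this shows that once $j$'s \textbf{while} loop exits (exactly when $\min_S\lambda(x,j,S)>(1+\eps)\lambda_0$) that inequality persists through the phase, so $\lambda^*(x')>(1+\eps)\lambda_0$ when operation~(b) fires. (This monotonicity is also why the \textbf{for} loop never revisits a facility.)

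For part~(ii), the plan is to keep globally, for each customer $i$, the integer $A_i x$ and the scalar $a_i(x)$ (each updated in $O(1)$ whenever $i$ is incremented), plus a count of customers with $A_i x<U$ for an $O(1)$-amortized return test. At the start of each turn of the \textbf{for} loop for a facility $j$, within each of the $O(U)$ phases (Alg.~\ref{alg:cov:seq} scales $\lambda_0$ only $O(U)$ times, $U=\ln(m)/\eps^2$), rebuild $S_j$ together with the running numerator $f_j+\sum_{i\in S_j}c_{ij}$ and denominator $\sum_{i\in S_j}a_i(x)$ by one scan of the $\deg(j):=|\{i:c_{ij}<\infty\}|$ eligible customers of $j$, in time $O(1+\deg(j))$; thereafter each \textbf{while} test is $O(1)$. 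After each ``increment for $j$'', rescale $a_i(x)$ and increment $A_i x$ for $i\in S_j$, delete from $S_j$ any $i$ now failing $c_{ij}<(1+\eps)\lambda_0\,a_i(x)$, and fix the numerator and denominator --- all $O(|S_j|)$, and correct because $\lambda_0$ (hence $S_j$) only shrinks inside the \textbf{while} loop, so no customer re-enters. Two charging bounds then finish it: rebuilding costs $\sum_{\text{phases}}\sum_j O(1+\deg(j))=O\big(U(n+N)\big)$; and increment work is $O\big(\sum_{\text{incr.}}|S_j|\big)=O\big(\sum_i\#\{\text{increments that }i\text{ joins}\}\big)=O(mU)$, since each such participation raises $A_i x$ by exactly $1$ and, once $A_i x>U$, $a_i(x)=0$ so $i$ enters no further $S_j$. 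Assuming w.l.o.g.\ $m,n\le N$, both totals are $O(NU)=O(N\log(m)/\eps^2)$.

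The main obstacle is the work bound in part~(ii): controlling the total time to build and maintain the canonical sets $S_j$ with their aggregate numerators and denominators. It rests on two facts --- (a) each customer is ``active'' (lies in some incremented $S_j$) at most $O(U)$ times, so the incremental maintenance telescopes to $O(mU)$; and (b) rebuilding each $S_j$ from scratch is affordable only because it is done just once per phase and $\sum_j\deg(j)=N$ with only $O(U)$ phases --- rebuilding $S_j$ after every increment, the naive approach, would be far too slow. Part~(i) is essentially routine once the monotonicity of the set-cover $\lambda$'s is observed.
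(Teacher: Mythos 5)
Your proposal is correct and follows essentially the same route as the paper: part (i) by matching Alg.~\ref{alg:fl} to Alg.~\ref{alg:cov:seq} on Hochbaum's set-cover \LP via Lemma~\ref{lemma:Si}, and part (ii) by rebuilding each $S_j$ once per run, maintaining it incrementally, and charging rebuilds to phases times degrees and increment work to the $O(U)$ participations per customer. You are in fact somewhat more careful than the paper on two points it leaves implicit --- that the unit step size of Alg.~\ref{alg:cov:seq} matches the ``$+1$'' updates, and that the monotonicity of $\min_S\lambda(x,j,S)$ is what makes operation (b) legal after the \textbf{for} loop --- but these are refinements of the same argument, not a different one.
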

\begin{proof}
  (i)
  Based on the reduction,
  doing an increment in Alg.~\ref{alg:cov:seq}
  corresponds to choosing a facility $j\in\fac$ and set $S\subseteq\cust$,
  then incrementing $y_j$ and $x_{ij}$ for $i\in S$;
  the precondition for doing the increment (in Alg.~\ref{alg:cov:seq})
  translates to $\lambda(x,j,S) \le (1+4\eps)\lambda_0$.
  When Alg.~\ref{alg:fl} does an increment
  for $(j, S_j)$, this precondition is met because $\lambda(x, j, S_j) \le (1+\eps)\lambda_0$.

  When Alg.~\ref{alg:fl} scales $\lambda_0$,
  by inspection, Alg.~\ref{alg:fl} guarantees 
  $\lambda(x,j,S_j) \ge (1+\eps)\lambda_0$
  for all $j\in \fac$.  
  By Lemma~\ref{lemma:Si} this implies
  $\lambda(x,j,S) \ge (1+\eps)\lambda_0$
  for all $S\subseteq \cust$,
  meeting the precondition in Alg.~\ref{alg:cov:seq}.

  \smallskip

  \noindent
  (ii) To implement the algorithm, 
  maintain $x$, $y$, $A_i x$, and $a_i(x)$ for each $i\in \cust$.
  Within each iteration of the {\bf for} loop for a given $j\in\fac$,
  call the sequence of increments done in the while loop a \emph{run} for $j$.
  To do the first increment of each run,
  compute $S_j$ and $\lambda(x,j,S_j)$ directly (from scratch).
  Then, in each subsequent increment in the run,
  update the relevant quantities incrementally:
  e.g., after doing an increment for $(j, S_j)$,
  update $A_i x$ and $a_i(x)$ (for $i\in S_j$ with $A_i x \le U$)
  by noting that the increment increases $A_i x$ by 1 
  and decreases $a_i(x)$ by a factor of $1-\eps$;
  delete from $S_j$ any $i$'s that cease to satisfy $c_{ij} < \lambda_0 a_i(x)$.

  The time for the run for $j$
  is proportional to (A) $|\{i \in \cust : c_{ij} < \infty\}|$
  (the number of possible customers that $j$ might serve),
  plus (B) the number of times $y_j$ and any $x_{ij}$'s increase by 1 during the phase.
  By Thm.~\ref{lemma:cov:seq} (ii), there are $O(U)$ iterations 
  of the outer {\bf repeat} loop,
  so the total time for (A) is $O(U \nz)$.
  Since each $y_i$ and each $x_{ij}$ never exceeds $U$,
  the time for (B) is also $O(U\nz)$.
\end{proof}

\paragraph{Parallel facility location.}
Lemma~\ref{lemma:cov:seq} proves Thm.~\ref{thm:fl} part (i).
Next we prove part (ii).

\begin{algorithm}[t]
  \caption{Parallel $(1+\eps)$-approximation algorithm for facility-location \LPs\label{alg:pfl}}
  \begin{algorithmic}[1]
    \Function{Parallel-Facility-Location}{facilities $F$, customers $C$, costs $f,c$} 
    \State Define $U$, $A_i x$, $a_i(x)$, etc.~as in Alg.~\ref{alg:fl},
    and $\ell = \max_{i\in \cust} \min_{j\in \fac} f_j + c_{ij}$.
    \State Initialize $x_{ij} \gets \eps \ell/\,(f_j + c_{ij})|\fac||\cust|$ 
    for $j\in \fac, i\in \cust$,
    and $y_j \gets \sum_{i\in \cust} x_{ij}$ for $j\in \fac$.
    \State Initialize $\lambda_0 \gets \ell/|a(x)|$.
    \Repeat
       \Block {\bf operation (b): scale $\lambda_0$} 
       \State Let $\lambda_0 \gets (1+\eps)\lambda_0$.
       \Comment{\emph{assertion:} $\min_{j,S} \lambda(x,j,S) \ge (1+\eps)\lambda_0$}
       \EndBlock
       \State\label{pfl:topup} Let $x\gets \text{\sc top-up}(x)$ 
       (per Lemma~\ref{lemma:pfl}).
       \emph{Guarantees $\forall i, j.~c_{ij}<\lambda_0\, a_i(x)$ only if $x_{ij} = y_j$.}
       \Repeat
       \Block {\bf operation (a): increment $x$}
       \Comment{\emph{assertion:} $\min_j \lambda(x,j,S_j) \le (1+4\eps)\lambda_0$}
          \State Define $S_j=\{i\in \cust : c_{ij} <\lambda_0\, a_i(x)\}$
          and $J = \{j\in \fac : \lambda(x, j, S_j) \le\lambda_0\}$.
          \State For each $j\in J$, 
          increase $y_j$ by $z\, y_j$,
          and increase $x_{ij}$ by $z\, y_j$ for $i\in S_j$,
          \State~~choosing $z$ s.t.~$z\max_{i\in \cust} \sum_{j\in J : i\in S_j} y_j = 1$
          (the max.~increase in any $A_i x$ is 1).
          \State \textbf{if} $\min_{i\in \cust} A_i x \ge U$ \textbf{then return} $x/U$. 
          \EndBlock
       \EndRepeat
       \State \bf{until} $J=\emptyset$
    \EndRepeat
    \EndFunction
  \end{algorithmic}
\end{algorithm}

\begin{lemma}\label{lemma:pfl}
  (i) Algorithm~\ref{alg:pfl} is a $(1+O(\eps))$-approximation algorithm for facility-location \LPs.
  \\(ii) Algorithm~\ref{alg:pfl}
  has a parallel implementation running in time
  $O(\log^2(m)\log(\nz/\eps)\log(\nz)/\eps^4)$
  and doing work
  $O(\nz\log(\nz/\eps)\log(m)/\eps^2)$. 
\end{lemma}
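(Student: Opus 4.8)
\textbf{Part (i), correctness.} The plan is to argue, exactly as in Lemma~\ref{lemma:cov:seq}(i) for the sequential case, that Alg.~\ref{alg:pfl} faithfully emulates Alg.~\ref{alg:cov} run on the set-cover \LP $(A,w)$ produced by Hochbaum's reduction, so that Lemma~\ref{lemma:cov} applies. There are two things to check. First, each increment of Alg.~\ref{alg:pfl} is a valid increment of Alg.~\ref{alg:cov}: the multiplicative update ``increase $y_j$ by $zy_j$ and $x_{ij}$ by $zy_j$ for $i\in S_j$'' corresponds, in the set-cover \LP, to incrementing the single variable $x'_{j'(j,S_j)}$; condition (i) of Alg.~\ref{alg:cov} is met because $J$ only contains $j$ with $\lambda(x,j,S_j)\le\lambda_0 \le (1+O(\eps))\opt/|a(x)|$ (using $\lambda_0$'s relation to $\opt$, which follows from the scaling bound as in Lemma~\ref{lemma:cov:seq}), and Lemma~\ref{lemma:Si} certifies $S_j$ is the optimal set; condition (ii) is met by the choice of $z$. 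Second, I must handle the subtlety that Alg.~\ref{alg:pfl} maintains $x_{ij}$ separately for each facility while the set-cover emulation only needs $A_i x=\sum_j x_{ij}$ and the $y_j$'s: the \textsc{top-up} step (line~\ref{pfl:topup}, Lemma~\ref{lemma:pfl} as quoted, guaranteeing $c_{ij}<\lambda_0 a_i(x)$ only if $x_{ij}=y_j$) is exactly what lets us pretend that whenever $i\in S_j$ we already have $x_{ij}=y_j$, so that each multiplicative bump of $y_j$ by $zy_j$ is mirrored by a bump of $x_{ij}$ by $zy_j$ keeping $x_{ij}=y_j$; thus the set-cover variable $x'_{j'(j,S_j)}$ and $y_j$ stay in lockstep. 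I would also check the ``operation (b)'' assertion $\min_{j,S}\lambda(x,j,S)\ge(1+\eps)\lambda_0$ holds when the inner repeat exits with $J=\emptyset$, again via Lemma~\ref{lemma:Si}. Finally, the $O(\eps)$ slack introduced by the multiplicative (rather than exact) increments and by the nonzero initial $x$ is absorbed into the $(1+O(\eps))$ guarantee, as in the proof of Thm.~\ref{thm:pc}(ii); the additive $w\cdot x^0$ term in Lemma~\ref{lemma:cov} is $O(\eps)\opt$ by the choice $x_{ij}=\eps\ell/((f_j+c_{ij})|\fac||\cust|)$, since $\ell\le\opt$.

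\textbf{Part (ii), work and time.} The plan mirrors the bounding argument for Alg.~\ref{alg:pc:par}. By Lemma~\ref{lemma:cov:seq}(ii) there are $O(U)=O(\log(m)/\eps^2)$ outer phases (each scales $\lambda_0$). Within a phase, the first increment of the inner repeat is computed from scratch in $O(\log N)$ time and $O(N)$ work; each subsequent increment is done incrementally, in $O(\log N)$ time and work linear in $\sum_{i}|\{j\in J: i\in S_j\}|$, the number of active (customer, facility) pairs. For the work bound: when $j\in J$ in an increment, $y_j$ grows by a factor $1+z$ where $z=\Theta(1/U)$ (since every active $A_i x=O(U)$), and $y_j$ ranges from its initial value $\Omega(\eps\ell/((f_j+c_{ij})|\fac||\cust|))$ up to $O(U/\max\dots)$, so any fixed facility $j$ is in $J$ for at most $O(U\log(N/\eps))$ increments total; hence any fixed eligible pair $(i,j)$ (with $c_{ij}<\infty$) contributes to the work in at most $O(U\log(N/\eps))$ increments, giving total work $O(NU\log(N/\eps))=O(N\log(N/\eps)\log(m)/\eps^2)$, plus $O(N)$ per phase for the from-scratch first increment, i.e.\ $O(NU)$ — absorbed. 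For the time bound: within each phase, some $j$ stays in $J$ until the inner repeat exits (it exits only when $J=\emptyset$), and no $j$ survives more than $O(U\log(N/\eps))$ increments, so each phase has $O(U\log(N/\eps))$ increments; times $O(U)$ phases times $O(\log N)$ per increment gives $O(U^2\log(N/\eps)\log N)=O(\log^2(m)\log(N/\eps)\log(N)/\eps^4)$.

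\textbf{Main obstacle.} The routine parts are the work/time accounting (a direct transcription of the Alg.~\ref{alg:pc:par} argument) and the increment-validity check. The real content is the \textsc{top-up} subroutine implicit in line~\ref{pfl:topup}: I must show it can be implemented within the stated work/time budget while establishing its guarantee that $c_{ij}<\lambda_0 a_i(x)$ implies $x_{ij}=y_j$. The difficulty is that after scaling $\lambda_0$ by $1+\eps$, the set $S_j$ can grow (newly ``nearby'' customers appear), and for those new pairs $x_{ij}$ may lag behind $y_j$; \textsc{top-up} must raise each such $x_{ij}$ to $y_j$ — but doing so increases $A_i x$, which could in turn shrink $a_i(x)$ and shift the $S_j$'s again, so one needs to argue this settles (e.g.\ in $O(\log(N)/\eps)$ sub-rounds of a parallel fixed-point iteration, since each customer's $A_i x$ is monotone and bounded by $U$) and that the extra mass added is charged against either the eventual solution cost (it is, since raising $x_{ij}$ to $y_j$ never exceeds what a corresponding set-cover increment would have put there) or against the work budget ($O(NU\log\dots)$ total). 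I expect the bulk of the proof of part (ii), and the only genuinely new ingredient relative to Section~\ref{sec:pc}, to be the analysis of \textsc{top-up}.
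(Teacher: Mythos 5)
Your overall architecture matches the paper's: emulate Alg.~\ref{alg:cov} on Hochbaum's set-cover \LP, verify increment validity via the invariant $\lambda_0\le(1+\eps)\opt/|a(x)|$ and Lemma~\ref{lemma:Si}, use the lockstep property $x_{ij}=y_j$ for $i\in S_j$ (preserved within a phase because $S_j$ only shrinks) to get $z\ge 1/U$, and run the same multiplicative-growth accounting as for Alg.~\ref{alg:pc:par}. But the part you flag as the ``main obstacle'' --- {\sc top-up} --- is where your plan has a genuine gap, in two respects. First, implementation: you propose a global fixed-point iteration with $O(\log(N)/\eps)$ sub-rounds because you worry that raising $x_{ij}$ shrinks $a_i(x)$ and ``shifts the $S_j$'s again.'' This interaction is confined to a single customer: the threshold $c_{ij}<\lambda_0 a_i(x)$ depends only on $A_ix=\sum_j x_{ij}$, so customers decouple entirely, and within one customer a single sweep over its facilities in increasing order of $c_{ij}$ suffices (once some $x_{ij}$ stops because $c_{ij}=\lambda_0 a_i(x)$, every later facility already fails the threshold). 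The paper implements this sweep with prefix sums $d_j=\sum_{j'\preceq_i j}(y_{j'}-x_{ij'})$ in $O(\log N)$ time and $O(N)$ work per call. Your sub-round scheme is not only unnecessary but would cost $O(N\log(N)/\eps)$ work per phase, exceeding the claimed total work $O(NU\log(mn/\eps))$ by roughly a $1/\eps$ factor.

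Second, accounting for the cost that {\sc top-up} adds: your claim that raising $x_{ij}$ to $y_j$ ``never exceeds what a corresponding set-cover increment would have put there'' is not a valid charge --- the increments that built $y_j$ used sets $S$ that did \emph{not} contain $i$, so their cost never included $c_{ij}$, and {\sc top-up} genuinely adds cost $c_{ij}(y_j-x_{ij})$ unaccounted for by the emulation. The paper instead shows {\sc top-up} preserves Invariant~\eqref{sc:invariant} directly: since $c_{ij}<\lambda_0 a_i(x)$ and $\lambda_0\le(1+\eps)\opt/|a(x)|$, the cost ratio $c\cdot x/\opt$ grows at rate at most $(1+\eps)a_i(x)/|a(x)|$ while $\lmin a(x)$ grows at rate at least $(1-O(\eps))a_i(x)/|a(x)|$, so the topped-up mass is paid for by the coverage it buys, exactly like an ordinary increment. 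With these two fixes your outline becomes the paper's proof; the rest of your part (ii) accounting (growth of $x_{ij}$ from $\eps\ell/((f_j+c_{ij})|\fac||\cust|)$ to at most $\ell/(f_j+c_{ij})$, hence $O(U\log(mn/\eps))$ active iterations per pair, and per-phase iteration count bounded via a surviving pair) is the same as the paper's.
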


\begin{proof}
  In line~\ref{pfl:topup}, {\sc top-up}$(x)$ 
  does the following for each customer $i\in\cust$ independently.
  Consider the facilities $j\in\fac$ such that $c_{ij} < \lambda_0 a_i(x)$ and $x_{ij} < y_j$,
  in order of increasing $c_{ij}$.  
  When considering facility $j$, increase $x_{ij}$ just until 
  either $x_{ij} = y_j$ or $c_{ij} = \lambda_0 a_i(x)$ (recall $a_i(x) = {(1-\eps)}^{A_i x}$).
  (Do this in parallel in $O(\log N)$ time and $O(N)$ work as follows.
  For each customer $i$, assume its facilities are presorted by increasing $c_{ij}$.
  Raise each $x_{ij}$ by $\delta_{ij}$, computed as follows.
  Compute the prefix sums $d_j = \sum_{j' \preceq_i\, j} y_j-x_{ij}$
  where $j' \prec_i j$ if $j'$ is before $j$ in the ordering.
  Check for each $j$ whether $c_{ij} < \lambda_0 {(1-\eps)}^{A_i x + d_j}$.
  If so, then let $\delta_{ij} = y_j - x_{ij}$;
  otherwise, if $c_{ij} < \lambda_0 {(1-\eps)}^{A_i x + d_{j'}}$
  where $j'$ is the facility preceding $j$, then choose $\delta_{ij}$
  to make $c_{ij} = \lambda_0 {(1-\eps)}^{A_i x + d_{j'} + \delta_{ij}}$;
  otherwise, take $\delta_{ij} = 0$.)

  \medskip

  \noindent
  \emph{(i)}  
  We prove that, except for the call to {\sc top-up} in line.~\ref{pfl:topup},
  Alg.~\ref{alg:pfl} is a specialization of Alg.~\ref{alg:cov}
  on the cover \LP.\@
  To verify,
  note that
  $\ell \le \opt(\fac, \cust, f, c) \le |\cust| \ell$,
  because the minimum cost to serve any single customer $j$
  is $\ell = \min_{j\in \fac} f_j + c_{ij}$,
  and each customer can be served at cost at most $\ell$.
  Then
  (following the proof of correctness of Alg's~\ref{alg:cov:seq} and~\ref{alg:fl})
  Alg.~\ref{alg:pfl} maintains the invariant
  $\lambda_0 \le (1+\eps)\opt(\fac,\cust, f, c)/|a(x)|$.
  (Indeed, initially $\lambda_0 = \ell/|a(x)| \le \opt(\fac,\cust,f,c)/|a(x)|$,
  because $\ell\le\opt(\fac,\cust,f,c)$.
  Increasing $x_{ij}$'s and $y_j$'s only decreases $|a(x)|$,
  so preserves the invariant.
  When Alg.~\ref{alg:pfl} increases $\lambda_0$ to $(1+\eps)\lambda_0$,
  by inspection $\min_{j\in \fac} \lambda(x,j,S_j) > \lambda_0$.
  By Lemma~\ref{lemma:Si}, this ensures
  $\min_{j\in \fac, S\subseteq \cust} \lambda(x,j,S) >\lambda_0$,
  which by Lemma~\ref{lemma:lambda}
  implies $\lambda_0 < \opt(\fac,\cust, f, c)/|a(x)|$,
  so the invariant is preserved.)
  Since $\lambda_0 \le (1+\eps)\opt(\fac,\cust,f,c)/|a(x)|$,
  by inspection of the definition of $J$ in Alg.~\ref{alg:pfl},
  the increment to $x$ and $y$ corresponds to a valid increment
  in Alg.~\ref{alg:cov}.
  So, except for the call to {\sc top-up} in line.~\ref{pfl:topup},
  Alg.~\ref{alg:pfl} is an implementation of algorithm Alg.~\ref{alg:cov}.

  Regarding the call to {\sc top-up}, we observe that it preserves 
  Invariant~\ref{sc:invariant} in the proof of correctness of Alg.~\ref{alg:cov}.
  (To verify this, consider any $x_{ij}$ with $x_{ij} < y_j$ 
  and $c_{ij} < \lambda_0 a_i(x)$.
  Increasing $x_{ij}$ increases 
  $c\cdot x/\opt$
  at rate $c_{ij}/\opt$ which, by the assumption on $c_{ij}$ and the invariant on $\lambda_0$,
  is at most $(1+\eps) a_i(x)/|a(x)|$.
  On the other hand, increasing $x_{ij}$
  increases $\lmin a(x)$ at rate
  at least $(1-O(\eps)) a_i(x)/|a(x)|$
  (see e.g.~\cite{young2000k}).
  Hence, invariant~\ref{sc:invariant} is preserved.)
  It follows that the performance guarantee from Lemma~\ref{lemma:cov} (i) holds here.
  Since $\ell \le \opt(\fac,\cust,f,c)$,
  the initial solution $(x^0,y^0)$ costs at most $\eps\opt(\fac,\cust,f,c)$,
  so, by that performance guarantee,
  Alg.~\ref{alg:pfl} returns a cover of cost $(1+O(\eps))\opt(\fac,\cust,f,c)$.
  This shows Lemma~\ref{lemma:pfl} part (i).

  \medskip 

  \noindent
  \emph{(ii)}
  Call each iteration of the outer loop a \emph{phase}.
  By Lemma~\ref{lemma:cov:seq} (ii), there are $O(U)$ phases.
  Consider any phase.
  In the first iteration of the inner loop,
  compute all quantities $J$,
  $A_i x$, and $a_i(x)$ for each $i\in \cust$, 
  $\lambda(x,j,S_j)$ etc.~directly, from scratch,
  in $O(\nz)$ work and $O(\log N)$ time.
  In each subsequent iteration within the phase, 
  update each changing quantity incrementally
  (similarly to Alg.~\ref{alg:pc:par}),
  doing work proportional to $\sum_{i\in\fac} |S_j|$, 
  the number of pairs $(i,j)$ where $i\in S_j$.
  
  At the start of the phase,
  the call to {\sc top-up} ensures $x_{ij} = y_j$ if $i\in S_j$.
  Because each $S_j$ decreases monotonically during the phase,
  this property is preserved throughout the phase.
  In the choice of $z$, each sum $\sum_{j\in J:i\in S_j} y_j$
  is therefore equal to $\sum_{j\in J:i\in S_j} x_{ij}$,
  which is less than $U$ (as $a_i(x) = 0$ if the sum exceeds $U$).
  Therefore, $z$ is at least $1/U$.
  Hence, for each $i\in S_j$,
  the variable $x_{ij}$ increases by at least a $1+1/U$ factor in the iteration.
  On the other hand, at the start of the algorithm $x_{ij} = \eps\ell/\,(f_{i}+c_{ij})|F||C|$,
  while at the end (by the performance guarantee (i) and $\ell\le\opt(f,c)$),
  $x_{ij} \le \ell/(f_{i}+c_{ij})$.
  Hence, $x_{ij}$ increases by at most a factor of $|F||C|/\eps$ throughout.
  Hence, the number of iterations in which $i$ occurs in any $S_j$ is
  $O(\log_{1+1/U} |F||C|/\eps) = O(U \log (|F||C|/\eps)) =  O(U \log mn/\eps)$.
  In each such iteration, $i$ contributes to at most $|\{j | c_{ij} < \infty\}|$ pairs.
  Hence, the total work is $O(\nz U \log (mn/\eps))$.

  To bound the time, note that within each of the $O(U)$ phases,
  there is some pair $i',j'$ such that $i'$ is in $S_{j'}$ 
  in the next-to-last iteration of the phase,
  and (since the sets $J$ and $S_{j'}$ monotonically decrease during the phase)
  $i'$ is in $S_{j'}$ in every iteration of the phase.
  As observed above, $i'$ occurs in $\cup_j S_{j}$ in at most $O(U\log(mn/\eps))$ iterations.
  Hence, the number of iterations of the inner loop within each phase is $O(U\log (mn/\eps))$.
  Since each iteration can be implemented in $O(\log mn)$ time,
  (ii) follows.
\end{proof}







\bibliographystyle{abbrv}
{\small
\bibliography{bib}
}

\section{Appendix}
\label{sec:appendix}
\subsection{Mixed packing and covering}

\newcommand{\reminder}[2]%
{\smallskip
  \par\noindent{\bf Reminder of Lemma~\ref{#1}.} \emph{#2}
  \par}

\begin{lemma}\label{lemma:pc:lambda}
  In Alg.~\ref{alg:pc},
  if $(P,C)$ is feasible, then, for any $x$,  
  $\lambda^*(x) \le |p(x)|/|c(x)|$.
\end{lemma}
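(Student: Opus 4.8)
The plan is a one-step averaging argument in the spirit of weak LP duality. Fix the vector $x$, and let $x^*\in\R_+^n$ be a feasible solution to the instance, so that (after the normalization implicit in Alg.~\ref{alg:pc}) $Cx^* \ge \mathbf 1$ and $Px^* \le \mathbf 1$; such an $x^*$ exists because $(P,C)$ is feasible. First I would record the defining inequality of $\lambda^*(x)=\min_{j\in[n]}\lambda(x,j)$: for every column $j\in[n]$,
\[
  \lambda^*(x)\,\bigl(C\tran_j\,c(x)\bigr) \;\le\; P\tran_j\,p(x),
\]
which holds even when $C\tran_j c(x)=0$, since then the right-hand side is nonnegative.

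Next I would take the $x^*$-weighted sum of these $n$ inequalities, getting
\[
  \lambda^*(x)\sum_{j\in[n]} x^*_j\,\bigl(C\tran_j\,c(x)\bigr) \;\le\; \sum_{j\in[n]} x^*_j\,\bigl(P\tran_j\,p(x)\bigr).
\]
Swapping the order of summation on each side turns the left-hand side into $\sum_i c_i(x)\,(Cx^*)_i$ and the right-hand side into $\sum_i p_i(x)\,(Px^*)_i$. Since $c_i(x)\ge 0$ and $(Cx^*)_i\ge 1$, the left side is at least $\sum_i c_i(x)=|c(x)|$; since $p_i(x)\ge 0$ and $(Px^*)_i\le 1$, the right side is at most $\sum_i p_i(x)=|p(x)|$. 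Combining, $\lambda^*(x)\,|c(x)| \le |p(x)|$, and dividing by $|c(x)|$ gives the claim.

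The only point needing a word of care is the division by $|c(x)|$: the inequality is meaningful exactly when $|c(x)|>0$, i.e.\ when some covering constraint still has $C_i x \le U$; if $|c(x)|=0$ the stated bound is vacuous (its right side is $+\infty$), and in that case the algorithm has in any event already terminated. I do not expect a genuine obstacle here — the truncation $c_i(x)=0$ when $C_i x>U$ is harmless because the estimate $\sum_i c_i(x)(Cx^*)_i \ge \sum_i c_i(x)$ uses nothing beyond $c_i(x)\ge 0$. So the whole proof is essentially this weak-duality computation; the "hard part" is merely bookkeeping the two index swaps and the degenerate case.
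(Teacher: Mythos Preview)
Your proof is correct and is essentially the same weak-duality/averaging argument the paper uses: both take a feasible $x^*$, combine the inequalities $(Cx^*)\cdot c(x)\ge |c(x)|$ and $(Px^*)\cdot p(x)\le |p(x)|$, and read off the bound on $\lambda^*(x)$. The only cosmetic difference is that the paper first shows $x^*\cdot\bigl(C\tran c(x)/|c(x)| - P\tran p(x)/|p(x)|\bigr)\ge 0$ and then extracts a single good coordinate $j$, whereas you start from $\lambda^*(x)\,C\tran_j c(x)\le P\tran_j p(x)$ for all $j$ and sum --- the underlying computation is identical.
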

\begin{proof}
  Let $x^*$ be a feasible solution.
  Since $x^*$ is feasible, it also satisfies
  $(Cx^*)\cdot c(x)/|c(x)| \ge 1 \ge (Px^*)\cdot p(x)/|p(x)|$,
  that is,
  $x^*\cdot \big(C\tran c(x)/|c(x)| - P\tran p(x)/|p(x)|\big) \ge 0$. 
  Hence there exists a $j$ such that
  $C\tran_j c(x) /|c(x)| - P\tran_j p(x) /|p(x)| \ge 0$,
  which is equivalent to
 $\lambda(x,j) \le |p(x)|/|c(x)|$.
\end{proof}

\reminder{lemma:pc}
{Given any feasible instance $(P,C)$, Alg.~\ref{alg:pc}
  \\(i) returns $x$ such that $Cx \ge 1$ and $Px \le 1+O(\eps)$,
  \\(ii) does step (a) at most $O(U)$ times, and
  \\(iii) does step (b) at most $O(m\, U)$ times,
  where $U = O(\log(m)/\eps^2 + \max_i P_i x^0/\eps)$.}

\begin{proof}
  (i) Fix $(P, C)$.
  First observe that the algorithm maintains the invariant $\lambda_0 \le |p(x)|/|c(x)|$.
  The initial choice of $\lambda_0$ guarantees that the invariant holds initially.
  By inspection, Line~\ref{pc:lambda} is executed 
  only when $\lambda^*(x) = \min_j \lambda(x,j) \ge (1+\eps)\lambda_0$.
  This and Lemma~\ref{lemma:pc:lambda}
  ($\lambda^*(x) \le |p(x)|/|c(x)|$) imply that 
  $(1+\eps)\lambda_0 \le |p(x)|/|c(x)|$,
  so that the invariant is maintained.

  Define $\lmax p(x) = \log_{1+\eps} \sum_{i} p_i(x)$
  and $\lmin c(x) = \log_{1-\eps} \sum_{i} c_i(x)$
  for $p$ and $c$ as defined in the algorithm.
  We show that the algorithm maintains the invariant
  \begin{equation}\label{pc:invariant}
    (1+O(\eps)) \lmin c(x) - \log_{1-\eps} m
    ~ \ge ~ 
    (1-O(\eps)) (\lmax p(x) - {\textstyle \max_i P_i x^0/\eps} - \log_{1+\eps} m),
  \end{equation}
  where $x_0$ is the initial solution given to the algorithm.
  By inspection, the invariant is initially true.
  In a given execution of step (ii), let $x$ be as at the start of the step;
  let index $j$ be the one chosen for the step.
  The step increases 
  $\lmax p(x)$ by at most
  $(1+O(\eps)) \sum_j \delta_j P_j\, p(x)/|p(x)|$;
  it increases $\lmin c(x)$ by at least
  $(1-O(\eps)) \sum_j \delta_j C_j \,c(x)/|c(x)|$
  (see e.g.~\cite{young2000k}).
  Since $\delta_j > 0$ only if  $\lambda(x,j) \le (1+\eps)\lambda_0$,
  the first invariant $\lambda_0 \le |p(x)|/|c(x)|$
  implies that the invariant is maintained.

  Consider the step when the algorithm returns $x/U$.
  Just before the step, 
  at least one $i\in [m]$ had $C_i x < U$,
  so $\lmin c(x) < \log_{1-\eps} {(1-\eps)}^{U} = U$,
  and by the above invariant $\max_i P_i x \le (1+O(\eps)) U$.
  During the step $\max_i P_i x$ increases by
  at most $1 = O(\eps U)$, so after the step $\max_i P_i x \le (1+O(\eps)) U$ still holds.
  Part (i) follows.

  \smallskip \noindent
  (ii)  The algorithm maintains
  $\lambda_0 \le |p(x)|/|c(x)|$, with equality at the start.
  Each time $\lambda_0$ increases, it does so by a $1+\eps$ factor,
  but throughout,  $P_i x = O(U)$ and $\min_i C_i x = O(U)$,
  so $|p(x)|/|c(x)|$ is always at most $m{(1+\eps)}^{O(U)}/{(1-\eps)}^{O(U)}$.
  Part (ii) follows.

  \smallskip \noindent
  (iii)  Each increment
  either increases some $P_i x$ by at least 1/2,
  or increases some $C_i x$ by at least 1/2 where $C_i x \le U$.
  Since $P_i x = O(U)$ throughout, part (iii) follows.
\end{proof}

\reminder{lemma:pc:imp:correct}
{ Given any feasible packing/covering instance $(P,C)$, 
  provided the updates in lines~\ref{pc:imp:estimate4}--\ref{pc:imp:estimate5} 
  maintain Invariant~\eqref{pc:imp:invariant}:
  \\ (i)~~Each operation (a) or (b) done by Alg.~\ref{alg:pc:imp}
  is a valid operation (a) or (b) of Alg.~\ref{alg:pc}, so
  \\ (ii) Alg.~\ref{alg:pc:imp} returns a $(1+O(\eps))$-approximate solution.}

\begin{proof}
  (i)
  Alg.~\ref{alg:pc:imp}
  only does increments for a given $j$
  when $\est\lambda_j \le {(1+\eps)}^2\lambda_0/(1-\eps)$,
  which (with the definition of $\est\lambda_j$ and $\eps\le 1/10$) 
  ensures $\lambda(x,j) \le (1+4\eps)\lambda_0$.
  Likewise, Alg.~\ref{alg:pc:imp} only scales $\lambda_0$
  when $\min_j \est\lambda_j > {(1+\eps)}^2\lambda_0/(1-\eps)$,
  which (with the guarantee)
  ensures $\min_j \lambda(x,j) \ge (1+\eps)\lambda_0$.
  Thus, the precondition of each operation is appropriately met. 
  (Alg.~\ref{alg:pc:imp}'s termination condition is slightly different
  than that of Alg.~\ref{alg:pc},
  but this does not affect correctness.)
  This proves (i).  Part (ii) follows from Lemma~\ref{lemma:pc} part (i).
\end{proof}

\subsection{Covering}

\begin{lemma}\label{lemma:lambda}
  In Alg.~\ref{alg:cov},
  for any $x$, $\opt(A,w) \ge |a(x)|\,\lambda^*(x)$,
  where $\lambda^*(x) ~=~ \min_{j\in[n]} \lambda(x,j)$. 
\end{lemma}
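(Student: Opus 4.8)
The plan is to exhibit $\lambda^*(x)\,a(x)$ as a feasible solution of the LP dual of the covering problem and then invoke weak duality. Recall that $\opt(A,w)$ is the optimum of $\min\{w\cdot x' : x'\in\R^n_+,\ Ax'\ge \mathbf 1\}$, whose dual is $\max\{\mathbf 1\cdot y : y\in\R^m_+,\ A\tran y \le w\}$. By weak duality, $\opt(A,w) \ge \mathbf 1\cdot y$ for every dual-feasible $y$ (and if the primal is infeasible the claimed inequality is trivial, so assume it is feasible). So it suffices to check that $y := \lambda^*(x)\,a(x)$ is dual feasible.

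First I would note $y\ge 0$, since $a(x)\ge 0$ by definition and $\lambda^*(x)\ge 0$. For the constraint $A\tran y \le w$, fix any column $j\in[n]$. If $A\tran_j a(x) = 0$, the $j$-th constraint reads $0 \le w_j$, which holds. Otherwise $\lambda(x,j) = w_j/(A\tran_j a(x))$ is finite and, by definition of $\lambda^*(x) = \min_{j'}\lambda(x,j')$, we have $\lambda^*(x) \le w_j/(A\tran_j a(x))$, i.e.\ $(A\tran y)_j = \lambda^*(x)\,A\tran_j a(x) \le w_j$. Hence $y$ is dual feasible, and weak duality gives $\opt(A,w) \ge \mathbf 1\cdot y = \lambda^*(x)\sum_i a_i(x) = \lambda^*(x)\,|a(x)|$, as claimed.

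If one prefers to avoid quoting LP duality, I would instead argue directly: let $x^*$ attain $\opt(A,w)$, so $Ax^*\ge \mathbf 1$ and $x^*\ge 0$. From $w_j \ge \lambda^*(x)\,A\tran_j a(x)$ for every $j$ we get $w\cdot x^* \ge \lambda^*(x)\,(x^*)\tran A\tran a(x) = \lambda^*(x)\,(Ax^*)\cdot a(x) \ge \lambda^*(x)\,\mathbf 1\cdot a(x) = \lambda^*(x)\,|a(x)|$, where the last inequality uses $Ax^*\ge \mathbf 1$ together with $a(x)\ge 0$. There is no real obstacle in this lemma: it is a one-line weak-duality estimate, and the only point needing a moment's care is the degenerate case $A\tran_j a(x)=0$ (where $\lambda(x,j)=+\infty$), which does not affect $\lambda^*(x)$ and is consistent with dual feasibility.
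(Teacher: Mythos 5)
Your proof is correct and, at its core, is the same argument as the paper's: both reduce the claim to the inequality $(Ax^*)\tran a(x) \ge |a(x)|$, which follows from $Ax^*\ge \mathbf 1$ and $a(x)\ge 0$. The paper packages this as an averaging/existence argument (a random $j$ drawn from $x^*/|x^*|$ satisfies $\lambda(x,j)\le \opt(A,w)/|a(x)|$ with positive probability), whereas you sum the bound $w_j\ge\lambda^*(x)\,A\tran_j a(x)$ against $x^*$ directly (equivalently, verify dual feasibility of $\lambda^*(x)\,a(x)$) --- the same computation read in the other direction.
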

\begin{proof}
  Let $x^*$ be a solution of cost $w\cdot x^* = \opt(A,w)$.
  Draw a single $j\in [n]$ at random from the distribution $x^*/|x^*|$.
  By calculation the expectation of the quantity
  \(A\tran_j a(x)/|a(x)| - w_j / (w\cdot x^*)\)
  is proportional to $(Ax^*)\tran a(x)/|a(x)| - x^*\cdot w/(w\cdot x^*)$,
  which is non-negative (as $Ax^*\ge 1$),
  so with positive probability the quantity is non-negative,
  implying $w\cdot x^* \ge |a(x)|\,\lambda(x,j)$.
\end{proof}

\reminder{lemma:cov}
{Alg.~\ref{alg:cov} returns a solution $x$
  such that $w\cdot x \le (1+O(\eps))\opt(A,w) + w\cdot x^0$,
  where $x^0$ is the initial solution given to the algorithm.
}
\begin{proof}
  First we observe that the algorithm is well-defined.
  In each iteration, by definition of $\lambda^*(x)$,
  there exists a $j\in [n]$ such that $\lambda(x,j) =\lambda^*(x)$.
  By Lemma~\ref{lemma:lambda},
  for this $j$, $\lambda(x,j) \le \opt(A,w)/|a(x)|$.
  So, in each iteration there exists a suitable vector $\delta\in\R^n_+$.
  Next we prove the approximation ratio.

  Define $\lmin a(x) = \log_{1-\eps} |a(x)|$ for $a$ as defined in the algorithm.
  We show that the algorithm maintains the invariant
  \begin{equation}\label{sc:invariant}
    (1+O(\eps)) (\lmin a(x) - \log_{1-\eps} m) ~ \ge ~ \frac{w\cdot x- w\cdot x^0}{\opt(A,w)}.
  \end{equation}
  The invariant is initially true by inspection.
  In a given iteration of the algorithm, let $x$ be as at the start of the iteration,
  let vector $\delta$ be the one chosen in that iteration.
  The iteration increases $w\cdot x/\opt(A,w)$ by $\sum_j \delta_j w_j/\opt(A,w)$.
  It increases $\lmin a(x)$ by at least
  $(1-O(\eps)) \sum_j \delta_j\, A\tran_{j} a(x)/|a(x)|$
  (see e.g.~\cite{young2000k}).
  By the choice of $\delta$,
  the definition of $\lambda^*$,
  and Lemma~\ref{lemma:lambda},
  if $\delta_j>0$ then $w_j/\opt(A,w) \le (1+O(\eps)) A\tran_j a(x)/|a(x)|$,
  so the invariant is maintained.

  Before the last iteration,
  at least one $i$ has $A_i x \le U$,
  so $\lmin\cov(x) \le \log_{1-\eps} {(1-\eps)}^{U} = U$.
  This and the invariant imply that finally
  $(w\cdot x- w\cdot x^0)/\opt(A,w) \le 1+(1+O(\eps)) U + \log_{1-\eps} m$.
  By the choice of $U$ this is $(1+O(\eps))U$.
\end{proof}

\reminder{lemma:cov:seq}
{ (i) Alg.~\ref{alg:cov:seq} is a specialization of Alg.~\ref{alg:cov}
  and (ii)  scales $\lambda_0$ $O(U) = O(\log(m)/\eps^2)$ times.
}

\begin{proof}
  (i) Observe that the algorithm maintains the invariant
  $\lambda_0 \le \opt(A,w)/|a(x)|$.
  The invariant is true for the initial choice of $\lambda_0$ 
  because the minimum cost to satisfy just a single constraint 
  $A_i x \ge 1$ is $\min_{j\in[n]} w_j/A_{ij}$.
  Scaling $\lambda_0$ only decreases $|a(x)|$, so maintains the invariant.
  increment is done only when $(1+\eps)\lambda_0 \le \lambda^*(x)$,
  which by Lemma~\ref{lemma:lambda} is at most $\opt(A,w)/|a(x)|$,
  so increment also preserves the invariant.
  Since the algorithm maintains this invariant, it is a special case of Alg.~\ref{alg:cov}
  with $\cost(x^0) = 0$.  

  \smallskip

  \noindent
  (ii) 
  One way to satisfy every constraint $A_i x \ge 1$ is as follows:
  for every $i$, choose $j$ minimizing $w_j/A_{ij}$ then add $1/A_{ij}$ to $x_j$.
  The cost of this solution is at most $m \lambda_0$.
  Hence, the initial value of $\lambda_0$ is at least $m^{-1} \opt(A,w)/|a(x)|$.
  At termination (by the invariant from part (i) above)
  $\lambda_0$ is at most  $\opt(A,w)/|a(x)|$.
  Also, $|a(x)|$ decreases by at most a factor of $m/{(1-\eps)}^U$ 
  during the course of the algorithm,
  while each scaling of $\lambda_0$ increases $\lambda_0$ by a factor of $1+\eps$.
  It follows that the number of scalings is at most
  $\log_{1+\eps} m^2/{(1-\eps)}^U = O(U)$.
\end{proof}

\subsection{Facility location}\label{sec:reductions}

\paragraph{Hochbaum's reduction~\cite{hochbaum1982heuristics}.} 
For every ``star'' $(j,C)$,
where $j$ is a facility and $C$ is a subset of clients $j$ might serve,
create a set $C_j$ containing those clients,
whose cost is the cost of opening facility $j$
and using $j$ to serve the customers in $C$.
There are exponentially many sets.

\paragraph{More efficient reduction~\cite[\S 6.2]{kolen1990covering}.}
For every facility $j$, create a set $F_j$ with cost equal to the cost of opening $j$.
For every client $i$ and facility $j$,
create an element $(i,j)$ and a set $S_{ij} = \{(i,j)\}$.
For every $i$,
let $\prec_i$ order the facilities $j$ by increasing distance from $i$,
breaking ties arbitrarily, and make $F_j$ be $\{ (i,k) : j \prec_i k \}$.
Give $S_{ij}$ cost equal to the distance $d(i,j)$ from $i$ to $j$,
minus the distance $d(i,j')$ to $i$'s next closest facility $j'$, if any.
For intuition, note that if a set $F_j$ is chosen,
then (for any given $i$)
that set covers $\{(i,k) : j \prec_i k\}$,
while $i$'s remaining elements
can be covered using the sets $\{S_{ik} : k \preceq_i j\}$
at total cost $d(i,j)$.
The resulting LP can have $\Omega(nm^2)$ non-zeros.


\end{document}